\definecolor{caribbeangreen}{rgb}{0.0, 0.8, 0.6}
\def\eps{\varepsilon}
\def\bR{{\mathbb R}}
\def\bN{{\mathbb N}}
\def\bZ{{\mathbb Z}}
\def\bX{{\mathbb X}}
\def\cM{\mathcal{M}}
\def\cL{\mathscr{L}}
\def\textgirth{bundledness}
\def\df{\mathrm{d}_\mathcal{F}}
\def\ddf{\mathrm{d}_{\mathrm{d}\mathcal{F}}}
\def\ball{\mathrm{B}}
\def\dd{\mathrm{d}}
\newcommand{\xx}[2]{\bX^{#1,#2}}
\newcommand{\xl}[3]{\bX^{#1,#2}_{#3}}
\title{$(1+\eps)$-ANN Data Structure for Curves via Subspaces of Bounded Doubling Dimension}
\author{Jacobus Conradi}{Department of Computer Science, University of Bonn, Germany}{conradi@cs.uni-bonn.de}{https://orcid.org/0000-0002-8259-1187}{Partially funded by the Deutsche Forschungsgemeinschaft (DFG, German Research Foundation) - 313421352 (FOR 2535 Anticipating Human Behavior). Affiliated with Lamarr Institute for Machine Learning and Artificial Intelligence.}
\author{Anne Driemel}{Hausdorff Center for Mathematics, University of Bonn, Germany}{driemel@cs.uni-bonn.de}{https://orcid.org/0000-0002-1943-2589}{Affiliated with Lamarr Institute for Machine Learning and Artificial Intelligence.}
\author{Benedikt Kolbe}{Hausdorff Center for Mathematics, University of Bonn, Germany}{bkolbe@uni-bonn.de}{}{}
\authorrunning{Conradi, Driemel, and Kolbe}
\keywords{Fréchet Distance, Nearest Neighbour, Doubling Space}
\begin{document}
\hideLIPIcs

\maketitle

\begin{abstract}
We consider the $(1+\eps)$-Approximate Nearest Neighbour (ANN) Problem for polygonal curves in $d$-dimensional space under the Fréchet distance and ask to what extent known data structures for doubling spaces can be applied to this problem. Initially, this approach does not seem viable, since the doubling dimension of the target space is known to be unbounded --- even for well-behaved polygonal curves of constant complexity in one dimension. In order to overcome this, we identify a subspace of curves which has bounded doubling dimension and small Gromov-Hausdorff distance to the target space. 
We then apply state-of-the-art techniques for doubling spaces and show how to obtain a data structure for the $(1+\eps)$-ANN problem for any set of parametrized polygonal curves. The expected preprocessing time needed to construct the data-structure is $F(d,k,S,\eps)n\log n$ and the space used is $F(d,k,S,\eps)n$, with a query time of $F(d,k,S,\eps)\log n + F(d,k,S,\eps)^{-\log(\eps)}$, where $F(d,k,S,\eps)=O\left(2^{O(d)}k\Phi(S)\eps^{-1}\right)^k$ and $\Phi(S)$ denotes the spread of the set of vertices and edges of the curves in $S$. We extend these results to the realistic class of $c$-packed curves and show improved bounds for small values of $c$. 


\end{abstract}

\newpage
\setcounter{page}{1}


\section{Introduction}
Given a set $S$ of $n$ points, the Nearest Neighbour Problem is the problem of finding the point in $S$ that minimizes the distance to a given query point $q$.
The Nearest Neighbour Problem is a fundamental problem whose variants have long been studied and applied in different areas, such as RNA sequencing \cite{bindewaldRNA}, disease diagnosing \cite{shoumanApplyingkNN}, motion pattern detection \cite{gudmundssonEfficient}, shape indexing \cite{beisShape} or handwritten digit recognition \cite{leeHandwritten}. The problem has been studied as early as the 1960s \cite{minskyPerceptrons}, and classical results such as the one by Shamos via point location in a Voronoi diagram achieve a query time of $O(\log n)$ while using $O(n\log n)$ space in $\bR^2$ \cite{shamosClosest}, which was later improved upon by Kirkpatrick to only require linear space and preprocessing time with the logarithmic query time~\cite{Kirkpatrick1983Optimal}. 

Various methods have been employed to design data structures for approximate solutions to the Nearest Neighbour Problem \cite{arya1998optimal,harpeledApproximateNN,kushilevitzEfficient}. However, these approaches often require some sense of `Euclidean' dimension, while applications often require to work in spaces with far more complicated distance measures. One complexity measure often used to generalize many different results to complicated metric spaces is the notion of doubling dimension~\cite{guptaBounded,kargerFinding,talwarBypassing}. The doubling dimension is the smallest number $d$ such that any metric ball inside the metric space can be covered by $2^d$ many balls of half the radius. It is a well known fact that in Euclidean space the doubling dimension roughly corresponds to the dimension, that is, the doubling dimension of $\bR^d$ is $\Theta(d)$. The Approximate Nearest Neighbour (ANN) Problem in spaces with low doubling dimension has been studied extensively~\cite{kargerFinding} and results are known that roughly match the bounds known for $\bR^d$ \cite{harpeledFast}.

The metric space we are interested in is the space of polygonal curves in $\bR^d$ under the Fréchet distance. Polygonal curves naturally arise from any sort of motion tracking, such as GPS data or motion capture data, and are therefore of much interest. The Fréchet distance is a natural distance measure on parametrized curves~\cite{Alt1995ComputingtFdbTPC} that --- unlike the Hausdorff distance --- takes into account the parametrization of the curves and has received considerable attention~\cite{ agarwal2014computing,aronov2006frechet, buchin2017four, chan2018improved, Colombe2020ApproximatingT, Driemel2010ApproximatingTF}. The metric space of curves under the Fréchet distance has been shown to have unbounded doubling dimension~\cite{driemelClustering}, which suggests that data structures designed for doubling spaces would perform poorly. Our primary motivation in this paper is to find a workaround to this problem, and to leverage the rich background of ANN results for doubling spaces to the Fréchet distance after all.



\subsection{Basic definitions}
The metric space we study in this paper is the set of all polygonal curves with the Fréchet distance as its metric.
\begin{definition}[polygonal curve]
    An \textbf{edge} in $\bR^d$ is the continuous map obtained from the linear interpolation of two points $a$ and $b$ in $\bR^d$. We may write $\overline{a\,b}$ to denote this unique edge.
	A \textbf{polygonal curve} $T:[0,1]\to \bR^d$ of complexity $n$ is defined by an ordered set of $n$ points in $\bR^d$ and is the result of $(n-2)$ concatenations of the $(n-1)$ edges in $\bR^d$ defined by any two consecutive points.
	We call the underlying points of a polygonal curve of complexity $n$ its \textbf{vertices}.
	For $0\leq s \leq t\leq 1$ we denote the subcurve of $T$ from $T(s)$ to $T(t)$ by $T[s,t]$.
\end{definition}

\begin{definition}[Fréchet distance]\label{def:frechet} 
	Given two curves $X$ and $Y$ in $\bR^d$, their Fréchet distance  is defined as
	\[ \df(X,Y) = \inf_{f,g:[0,1]\rightarrow[0,1]} ~ \max_{t\in[0,1]}\|X(f(t)) - Y(g(t))\|\]
	where $f$ and $g$ are continuous, non-decreasing and surjective.
\end{definition}

In our definition of the Fr\'echet distance given above, we follow Alt and Godau~\cite{alt1995approximate}.
The Fréchet distance is generally not a metric, but rather a pseudo-metric, as there are curves $X\neq Y$, such that $\df(X,Y)=0$. However this is easily remedied by considering the quotient space induced by the equivalence relation $X\sim Y \iff \df(X,Y)=0$. If we assume that no vertex of an input curve lies inside the convex hull of its two neighbors along the curve, then the ordered sequence of vertices of two equivalent curves is the same. Therefore, we can eliminate all duplicates in near-linear time using lexicographical sorting. This results in a set of curves with pairwise non-zero distance. In the following, we simply assume that the Fr\'echet distance between any two curves in the input is non-zero.

\begin{definition}
    Denote by $(\xx{d}{k},\df)$ the metric space of polygonal curves in $\bR^d$ with complexity $k$ under the continuous Fréchet distance. We further write  $(\xl{d}{k}{\Lambda},\df)$ for the subspace of polygonal curves in $(\xx{d}{k},\df)$ where the length of each edge is bounded by $\Lambda$.
\end{definition}

A well-studied variant of the continuous Fréchet distance is the discrete Fréchet distance where only the vertices---and not their connecting edges---are considered in the computation. Similar to the continuous Fréchet distance, the discrete Fréchet distance is a pseudo-metric and can be regarded as a metric in a corresponding quotient space.
\begin{definition}[discrete Fréchet distance]
	Given two polygonal curves $X$ and $Y$ in $\bR^d$ defined by vertices $x_1,\ldots,x_n$ and $y_1,\ldots,y_m$, their discrete Fréchet distance is defined as
	\[ d_{d\mathcal{F}}(X,Y) = \inf_{f,g} ~ \max_{t\in[0,1]}\|x_{f(t)} - y_{g(t)}\|,\]
	where $f:[0,1]\rightarrow\{1,\ldots,n\}$ and $g:[0,1]\rightarrow\{1,\ldots,m\}$ are non-decreasing and surjective functions.
\end{definition}

\begin{observation}\label{obs:dfddf}
Let $P,Q$ be polygonal curves in $\bR^d$. Then $\df(P,Q)\leq\ddf(P,Q)$.
\end{observation}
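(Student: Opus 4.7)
The plan is to convert an optimal discrete traversal into a continuous one without increasing the cost. Concretely, let $P$ have vertices $x_1,\dots,x_n$ and $Q$ have vertices $y_1,\dots,y_m$, and let $r=\ddf(P,Q)$. By definition of the discrete Fréchet distance (and since the infimum over step functions is attained by a combinatorial argument), there exists a sequence of index pairs
\[
 (i_1,j_1)=(1,1),\ (i_2,j_2),\ \dots,\ (i_L,j_L)=(n,m),
\]
such that each consecutive pair differs by $(+1,0)$, $(0,+1)$, or $(+1,+1)$, and $\|x_{i_\ell}-y_{j_\ell}\|\le r$ for every $\ell$.

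From this I would build continuous, non-decreasing surjective reparametrizations $f,g:[0,1]\to[0,1]$ as follows. Partition $[0,1]$ into $L-1$ intervals $I_1,\dots,I_{L-1}$ of equal length, and on $I_\ell$ linearly interpolate $f$ from the parameter value of $x_{i_\ell}$ to that of $x_{i_{\ell+1}}$ (so $f$ is constant on $I_\ell$ when $i_{\ell+1}=i_\ell$), and similarly for $g$ between the parameters of $y_{j_\ell}$ and $y_{j_{\ell+1}}$. Then $f$ and $g$ are continuous, non-decreasing, and surjective by construction.

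The key step is to bound $\|P(f(t))-Q(g(t))\|$ on each $I_\ell$. On such an interval, $P\circ f$ traces either the constant point $x_{i_\ell}$ or the edge $\overline{x_{i_\ell}x_{i_{\ell+1}}}$ at constant speed, and similarly for $Q\circ g$; in all three step types, $t\mapsto P(f(t))-Q(g(t))$ is an affine function of $t\in I_\ell$. Hence $t\mapsto \|P(f(t))-Q(g(t))\|$ is convex on $I_\ell$ and attains its maximum at an endpoint, where the value is $\|x_{i_\ell}-y_{j_\ell}\|$ or $\|x_{i_{\ell+1}}-y_{j_{\ell+1}}\|$, both at most $r$. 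Taking the maximum over $t\in[0,1]$ gives $\max_t\|P(f(t))-Q(g(t))\|\le r$, and therefore $\df(P,Q)\le \ddf(P,Q)$ by \Cref{def:frechet}.

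The only real subtlety is handling the $(+1,+1)$ step: here one must observe that the difference of two linear interpolations at the same rate is still affine, so convexity of the norm still forces the maximum to the endpoints; the $(+1,0)$ and $(0,+1)$ cases are immediate since a point-to-segment distance with linear parametrization is the norm of an affine function. No case requires more than this single convexity argument, so I do not anticipate a serious obstacle.
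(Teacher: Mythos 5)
The paper states this observation without proof, as it is a well-known fact. Your argument is correct and is the standard one: take a discrete coupling realizing $r=\ddf(P,Q)$, interpolate it piecewise-linearly to obtain continuous, non-decreasing, surjective reparametrizations $f,g$, and observe that on each subinterval the map $t\mapsto P(f(t))-Q(g(t))$ is affine, so its norm is convex and attains its maximum at an endpoint, where the value is at most $r$. One small point worth making explicit: the paper's definition of $\ddf$ ranges over all non-decreasing surjective step functions, not only those corresponding to lattice couplings; but surjectivity together with monotonicity forces $f$ (resp.\ $g$) to take every index value in $\{1,\dots,n\}$ (resp.\ $\{1,\dots,m\}$) in increasing order, so the sequence of visited index pairs is a monotone lattice path with steps in $\{(+1,0),(0,+1),(+1,+1)\}$, and since there are only finitely many such paths the infimum is attained, as you assert. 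With that noted, the proof is complete.
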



\subparagraph{Problem definition ($(1+\eps)$-ANN)}
Let $(\cM,\dd_\cM)$ be a metric space. Let $P\subset \cM$ be a set of points in $\cM$ and a parameter $\eps>0$ be given. For a given point $q\in\cM$, the $(1+\eps)$-Approximate Nearest Neighbour Problem~($(1+\eps)$-ANN) is to find a point $\hat{x}\in P$ whose distance to $q$ approximates the distance to the nearest neighbour in $P$. Specifically, $\hat{x}\in P$ is a valid solution iff for all $x\in P$ it holds that
\[\dd_\cM(q,\hat{x})\leq(1+\eps)\,\dd_\cM(q,x)\]

\subsection{State of the art}\label{sec:previouswork}

Most of the work on nearest-neighbour data structures for curves has focused on the discrete Fréchet distance. 
One such recent result is due to Filtser, Filtser and Katz~\cite{filtser2019approximate} who presented a data structure for the $(1+\eps,r)$-Approximate Near Neighbour Problem of size $n\cdot O(1/\eps)^{md}$, with a query time of $O(md)$, where $n$ is the number of input curves, and $m$ is the complexity of the input-curves. When the complexity $k$ of the query curve is small compared to the complexity $m$ of the input curves, the space can be improved to $n\cdot O(1/\eps)^{kd}$ with query time $O(kd\log(nkd/\eps))$. In the $(1+\eps,r)$-Approximate Near Neighbour Problem the goal is to construct a data structure on a set of input-curves which for a given query-curve outputs any of the input-curves that is at distance at most $(1+\eps)r$ to the query curve, if there is an input-curve with distance at most $r$ to the query-curve. 

Results for the $(1+\eps,r)$-Approximate Near Neighbour problem readily extend to the $(1+\eps)$-Approximate \textit{Nearest} Neighbour Problem by a result of Har-Peled, Indyk and Motwani~\cite{harpeledApproximateNN}. This reduction incurs merely an additional logarithmic factor $O(\log^2(n))$ in the size, and $O(\log n)$ in the query time.

In contrast to the multitude of approaches to the discrete Fréchet distance for arbitrary dimension, results concerning the continuous Fréchet distance appear harder to come by.
Consider the naive approach of approximating the continuous Fréchet distance via the discrete Fréchet distance. For this, let a set of $n$ curves in $\bX^{d,k}_\Lambda$ as well as the approximation parameter $\eps$ be given. For the discrete Fréchet distance to approximate the continuous Fréchet distance up to an additive term of $r\eps$ for some $r>0$, we require successive vertices to lie at most $\Theta(r\varepsilon)$ far apart. Thus we subdivide every edge into edges of length at most $r\eps$, resulting in a set of $n$ curves each of complexity $O(k\Lambda/(r\eps))$. Building the $(1+\eps,r)$-ANN data structure from~\cite{filtser2019approximate} results in a space requirement of $n\cdot O(1/\eps)^{O(dk\Lambda/(r\eps))}$. As the radii used in the reduction can be as small as $r^*/n$, where $r^*$ is roughly the distance of some two input curves, this extends to a data structure of size $O(n\log^2(n))\cdot O(1/\eps)^{O(ndk\Lambda/(r^*\eps))}$ for the $(1+\eps)$-ANN problem.

Bringmann et al.~\cite{bringmann2022tight}  showed that there is a $(1+\eps,r)$-Approximate Near Neighbour data structure for the continuous Fréchet distance in one dimension, which uses $n\cdot O(\frac{m}{k\eps})^k$ space, needs $O(nm)\cdot O(\frac{m}{k\eps})^k$ expected preprocessing time and achieves a query time of $O(k\cdot2^k)$. 
They also show tightness of their data structure bounds in several scenarios. More precisely, they show conditional lower bounds  based on the Orthogonal Vectors Hypothesis that give reason to believe that one cannot achieve preprocessing time $\mathrm{poly}(n)$ and query time $O(n^{1-\eps'})$ at the same time, when $k$ is $1\ll k \ll \log n$ and $m > k\cdot  n^{c/k}$, even if $d=1$. Their arguments also apply to the $c$-ANN problem under the continuous Fr\'echet distance for any $c \in [1,2)$.  

In two dimensions, Afshani and Driemel presented a data structure based on semi-algebraic range searching that solves the (exact) Near Neighbour Problem under the Fr\'echet distance~\cite{afshani2018complexity}. The data structure needs $O(n(\log\log n)^{O(m^2)})$ space and has query time $O(\sqrt{n}\log^{O(m^2)}n)$. 

In higher dimensions, Mirzanezhad recently presented a data structure result for the $(1+\eps,r)$-Approximate Near Neighbour Problem under the continuous Fréchet distance, using space in $n\cdot O((\max(1,D)\sqrt{d}/\eps^2)^{kd})$  and query time in $O(kd)$, where $D$ denotes the diameter of the underlying vertex set of the input curves~\cite{MIRZANEZHAD2023journal}. 
However, as presented this data structure works only if $\eps<\delta$ (refer to Theorem 8 in the arXiv version~\cite{MirzanezhadANN}). 
The data structure covers the entirety of the input-curves with a grid of edge-length roughly $\eps \delta$, to precompute an answer for every sequence of $k$ gridpoints. A query-curve is then snapped to the closest grid points and the precomputed answer is given as the output. 
For smaller values of $\delta$ one would have to scale the input, increasing $D$ accordingly. As a result, combining this data structure with the standard reduction by Har-Peled, Indyk and Motwani~\cite{harpeledApproximateNN} does not lead to an efficient data structure for the ANN-problem. 

Independent to our work, Cheng and Huang very recently presented a $(1+\eps)$-ANN data structure for polygonal curves in arbitrary dimension under the continuous Fréchet distance. The data structure uses space in  $\tilde{O}\left(k(mnd^d/\eps^d)^{O(k+1/\eps^2)}\right)$ and achieves query time in $\tilde{O}\left(k(mn)^{0.5+\eps}/\eps^d+k(d/\eps)^{O(dk)}\right)$. Here, $\tilde{O}(\cdot)$ hides polylogarithmic factors with exponents in $O(d)$.


\subsection{Our contribution}

In this paper, we provide a $(1+\eps)$-ANN data structure for a set of curves in arbitrary dimensions under the continuous Fréchet distance. One of the main ingredients is the construction of a suitable subspace of the space of all polygonal curves. This subspace has small distance to the original space as metric spaces and additionally, unlike the space of all polygonal curves, bounded doubling dimension. Our approach can thus be seen as a particular instance of the more general idea of using results on the ANN problem for spaces of bounded doubling dimension in the context of spaces that are close to a space of bounded doubling dimension, even if they do not themselves have this property.
This approach was inspired by the work of Sheehy and Sheth on the bottleneck distance for persistence diagrams~\cite{sheehyNearlydsopd}.

Throughout this paper, for simplicity of the exposition, we will not distinguish between small or large $k$, that is, we assume $k=m$. A key result is the following (Section~\ref{sec:ANN}).

\begin{restatable}{theorem}{lemapproxresult}\label{lem:approxresult}
Given a set $S$ of $n$ polygonal curves in $\xl{d}{k}{\Lambda}$ and parameters $0<\eps< 1$ and $\eps'>0$, one can construct a data structure that for given $q\in\xx{d}{k}$ outputs an element $s^*\in S$ such that for all $s\in S$ it holds that $\df(s^*,q)\leq(1+\eps)\df(s,q) + \eps'$. The query time is $F(d,k,\Lambda,\eps')\log n + F(d,k,\Lambda,\eps')^{-\log(\eps)}$, the expected preprocessing time is $F(d,k,\Lambda,\eps')n\log n$ and the space used is $F(d,k,\Lambda,\eps)n$, where $F(d,k,\Lambda,\eps') = O\left(2^{O(d)}k(1+\Lambda/\eps')\right)^k$.
\end{restatable}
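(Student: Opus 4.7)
My plan is to construct a subspace $\cL \subseteq \xx{d}{k}$ with two properties: (i) every curve $T \in \xl{d}{k}{\Lambda}$ admits a representative $\pi(T) \in \cL$ with $\df(T,\pi(T)) \leq \eps'$, and (ii) $\cL$ has doubling dimension $O(\log F(d,k,\Lambda,\eps'))$. With $\cL$ in hand, I will snap the input to $S' := \{\pi(s) : s \in S\} \subseteq \cL$, preprocess $S'$ using a standard $(1+\eps)$-ANN data structure for doubling metrics (e.g., the navigating-nets / Har-Peled--Mendel construction, whose preprocessing time, space, and query time depend on the dataset only through its doubling dimension), and answer a query $q \in \xx{d}{k}$ by looking up an ANN of $q$ in $S'$ and returning its preimage under $\pi$. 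The stated complexity bounds will then follow from the standard complexities of doubling-space ANN.

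For the construction, I would let $G \subset \bR^d$ be an axis-aligned grid of spacing $\eps'/\sqrt{d}$, and take $\cL$ to be the set of polygonal curves of complexity $k$ whose vertices lie in $G$ and whose edges have length at most $\Lambda + \eps'$. The map $\pi$ sends $T$ to the curve obtained by snapping every vertex of $T$ to its nearest point of $G$; each vertex moves by at most $\eps'$, hence by Observation~\ref{obs:dfddf} we have $\df(T, \pi(T)) \leq \ddf(T, \pi(T)) \leq \eps'$, proving (i). For the overall approximation error, two triangle inequalities give that, for the returned $s^* \in S$ and any $s \in S$,
\[
\df(s^*, q) \leq \df(\pi(s^*), q) + \eps' \leq (1+\eps)\df(\pi(s), q) + \eps' \leq (1+\eps)\df(s, q) + (2+\eps)\eps',
\]
so rescaling $\eps'$ by a small constant at the outset recovers the stated guarantee.

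The main obstacle is (ii), the doubling dimension of $\cL$. I would prove it by bounding a packing number: for every $T \in \cL$ and every $r > 0$, at most $F(d,k,\Lambda,\eps')$ pairwise $r/2$-separated curves fit inside $B_{\df}(T, r) \cap \cL$. The bounded-edge-length assumption confines every vertex of a curve in $B_{\df}(T, r)$ to the Minkowski $r$-neighbourhood of the polygonal trace of $T$, which meets $G$ in $O\!\left(k(1+\Lambda/\eps')(r/\eps')^{d-1}\right)$ points. To make the bound scale-invariant in $r$, these candidate positions must then be charged to individual vertices of $T$ via the monotone Fr\'echet correspondence, exploiting the fact that any two $r/2$-separated curves must disagree by $\Omega(r)$ on at least one corresponding vertex under any near-optimal reparameterization. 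This is the delicate part: naive enumeration of vertex sequences grows with $r$, and handling reparameterization invariance requires care. Once the packing bound is established, a standard greedy covering argument translates it into the desired doubling-dimension bound, and plugging into the doubling-space ANN machinery finishes the proof.
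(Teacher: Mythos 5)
Your high-level strategy matches the paper's: identify a subspace of $\xx{d}{k}$ with bounded doubling dimension close (in Gromov--Hausdorff distance) to $\xl{d}{k}{\Lambda}$, snap the input set $S$ into it via a small-distortion map, and feed the result into the Har-Peled--Mendel data structure. The triangle-inequality bookkeeping at the end is also correct. However, your choice of subspace is fundamentally different from the paper's, and the key lemma you would need --- that this subspace has doubling dimension $O(\log F)$ --- is not proved, and I do not believe your proof sketch can be completed as written.

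The paper discretises \emph{edge lengths} to multiples of a fixed $\eps$, not vertex positions. This choice is crucial: it enables Lemma~\ref{lem:psld}, which says that once the edge length $\lambda$ and the previous vertex are fixed, every feasible endpoint of an edge that stays within Fr\'echet distance $\Delta$ of a subcurve of $P$ lies in a \emph{single} ball of radius $5\Delta$, \emph{independent of $\lambda$}. Combined with Lemma~\ref{lem:ppld}, this charges each next-vertex choice to one of $\mu$ edge lengths and one of $k$ subintervals of $P$, yielding $O(k\mu c^d)$ options per vertex at every scale $\Delta$ --- a bound that is genuinely scale-invariant. Your grid-vertex subspace $\cL$ lacks this structure. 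Without the edge-length discretisation, the set of feasible next vertices (over all possible edge lengths) is the entire $\Delta$-tube around $P$, whose grid-point count $\Theta\!\bigl(k\Lambda\Delta^{d-1}/\eps'^d\bigr)$ grows with $\Delta$, which is precisely the ``naive enumeration grows with $r$'' obstacle you flag. Your proposed fix --- charging to vertices of $T$ via a near-optimal reparameterisation and claiming that two $r/2$-separated curves must disagree by $\Omega(r)$ at some corresponding vertex --- does not obviously work: the continuous Fr\'echet distance is not determined by vertex correspondences, different curves in the packing may align to different portions of $T$, and disagreement can occur in edge interiors. In short, you have correctly identified the hard step but have not supplied the idea that makes it tractable, and it is exactly this idea (fix $\lambda$ first, then observe the endpoint is constrained to an $O(\Delta)$-ball) that the paper's $(\mu,\eps)$-curve discretisation provides and your grid snapping does not.
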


To turn this into a data structure with a purely multiplicative error we choose $\eps'$ as the smallest Fréchet distance of any two distinct input curves over $\eps$. Thus, the running time and space complexity of our data structure polynomially depends on a numerical value, which we call the \textgirth{} of the set of input curves.

\begin{definition}[\textgirth{}]
Given a set of curves $S\in\xx{d}{k}$, the \textgirth{} $\mathcal{G}(S)$ of $S$ is defined as 
\[\mathcal{G}(S) = \frac{\min_{s\neq s' \in S}\df(s,s')}{\max_{e\in E(S)}\|e\|},\]
where $E(S)$ denotes the set of edges of curves in $S$ and $\|e\|$ denotes the length of the edge $e$.
\end{definition}

The \textgirth{} is reminiscent of the global stretch --- a measure of complexity on geometric graphs --- which was introduced by Erickson~\cite{erickson2005localpolyhedra} and further analyzed by Bose et al.~\cite{bose2010sigmalocal}. The \textgirth{} is closely related to the spread of the set of vertices and edges of the input curves, in that it is lower bounded by the reciprocal of the spread (see Lemma~\ref{lem:girthspread}).

\begin{definition}[spread]
For a point set $P$ in some metric space $(\cM,d_\cM)$, we define the spread $\Phi(P)$ as the ratio between the maximal and minimal pairwise distance of points in $P$. Similarly, define the spread $\Phi(S)$ of a collection of sets as the ratio between the maximal and minimal non-zero pairwise distances of sets in $S$, where the distance between two sets $A,B\subset\cM$ is defined as $\dd_\cM(A,B)=\min_{a\in A}\min_{b\in B}\dd_\cM(a,b)$.
\end{definition}

With these definitions, our principal results can be summarized as follows (Section~\ref{sec:ANN}).


\begin{restatable}{theorem}{thmgirthmain}\label{thm:girth_main}
Given a set $S$ of $n$ polygonal curves in $\xx{d}{k}$ and $0<\eps\leq 1$ one can construct a data structure answering $(1+\eps)$-approximate nearest neighbour queries. The query time is $F(d,k,S,\eps)\log n + F(d,k,S,\eps)^{-\log(\eps)}$, the expected preprocessing time is $F(d,k,S,\eps)n\log n$ and the space used is $F(d,k,S,\eps)n$, where $F(d,k,S,\eps)=O\left(2^{O(d)}k(1+\mathcal{G}(S)^{-1}\eps^{-1})\right)^k$.
\end{restatable}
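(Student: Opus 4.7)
The plan is to reduce the purely multiplicative guarantee of Theorem~\ref{thm:girth_main} to the mixed multiplicative-additive guarantee of Theorem~\ref{lem:approxresult} by choosing the additive slack $\eps'$ small enough in terms of the \textgirth{} of $S$. Concretely, I set $\Lambda := \max_{e\in E(S)}\|e\|$ (so $S \subset \xl{d}{k}{\Lambda}$) and $\mu := \min_{s\neq s'\in S}\df(s,s')$, so that $\mathcal{G}(S)=\mu/\Lambda$, and then apply Theorem~\ref{lem:approxresult} with parameters $\eps/2$ (in place of $\eps$) and $\eps' := \eps\mu/c$ for a sufficiently large absolute constant $c$. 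The data structure returns some $s^*\in S$ satisfying $\df(s^*,q)\leq (1+\eps/2)\df(s,q)+\eps'$ for every $s\in S$, and I return this $s^*$ unchanged.

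Correctness splits into two cases depending on where the true nearest neighbour $s_{NN}$ lies. If $\df(s_{NN},q)\geq \mu/c''$ for a suitable $c''$ (chosen so that $\eps' \leq (\eps/2)\mu/c''$, e.g.\ $c''=4$, $c=8$), then the additive term is absorbed into the multiplicative one and $\df(s^*,q)\leq (1+\eps)\df(s_{NN},q)$. Otherwise $\df(s_{NN},q)<\mu/c''$, and the triangle inequality combined with $\df(s,s_{NN})\geq \mu$ for all $s\neq s_{NN}$ forces any putative alternative output to have $\df(\cdot,q)\geq \mu - \mu/c''$; a short calculation shows that the upper bound $(1+\eps/2)(\mu/c'')+\eps'$ on $\df(s^*,q)$ is strictly smaller than this, so $s^*=s_{NN}$ and the answer is exact. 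Either way, we obtain a $(1+\eps)$-ANN.

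For the resource bounds, I substitute $\eps' = \eps\mu/c$ into $F(d,k,\Lambda,\eps')=O(2^{O(d)}k(1+\Lambda/\eps'))^k$. Since $\Lambda/\eps'=c/(\eps\mathcal{G}(S))$, the constant $c$ is swallowed by the $O(\cdot)$ and we recover $F(d,k,S,\eps)=O(2^{O(d)}k(1+\mathcal{G}(S)^{-1}\eps^{-1}))^k$ as claimed, and the query/preprocessing/space bounds carry over verbatim from Theorem~\ref{lem:approxresult}.

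The main obstacle is purely algorithmic rather than analytic: to instantiate $\eps'$ we must know (or lower-bound) $\mu$, but computing all pairwise continuous Fréchet distances would exceed the preprocessing budget $F(d,k,S,\eps)n\log n$. I would sidestep this by replacing $\mu$ with the lower bound $\Lambda/\Phi(S)$ obtained from Lemma~\ref{lem:girthspread}; the vertex/edge spread $\Phi(S)$ is computable by a closest-pair routine on the $O(nk)$ vertices and edges within the allotted time, and plugging this surrogate into $\eps'$ only multiplies the constants hidden inside $F$, leaving the asymptotic form unchanged. A cleaner alternative, if a tighter $\mu$ is desired, is a bootstrap: build the structure with a rough estimate, run one ANN query from each input curve to refine the closest-pair estimate, and rebuild once; this adds only $F\cdot n\log n$ to the preprocessing.
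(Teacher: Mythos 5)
Your proof takes essentially the same approach as the paper: reduce to Theorem~\ref{lem:approxresult} by setting $\eps'$ proportional to $\eps\cdot\min_{s\neq s'}\df(s,s')$, then argue by cases on whether $\df(q,s_{NN})$ is small or large compared to $\min_{s\neq s'}\df(s,s')$. The paper uses $\eps'=\eps/4$ for the multiplicative parameter and a slightly different case threshold ($(2+\eps')\df(q,s^*)$ vs.\ $(1-\eps')\min\df$), but both calculations are correct and yield the same bound since the extra constant is absorbed by the $O(\cdot)^k$ in $F$.

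One small caveat on your algorithmic aside: replacing $\mu=\min_{s\neq s'}\df(s,s')$ with the surrogate $\Lambda/\Phi(S)$ does preserve correctness (any lower bound on $\mu$ works), but it does not leave the asymptotic form unchanged --- $\Lambda/\eps'$ then becomes proportional to $\Phi(S)/\eps$ rather than $\mathcal{G}(S)^{-1}/\eps$, and by Lemma~\ref{lem:girthspread} these can differ, so you would only recover the bound of the corollary stated in terms of $\Phi(S)$. That said, the paper itself does not address how to compute $\min_{s\neq s'}\df(s,s')$ within the preprocessing budget, so your observation is a reasonable one to raise.
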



Replacing the \textgirth{} with the more pessimistic spread of the vertices and edges of the input curves we get the following result.

\begin{restatable}{corollary}{corgirthmain}
Given a set $S$ of $n$ polygonal curves in $\xx{d}{k}$ and $0<\eps\leq 1$ one can construct a data structure answering $(1+\eps)$-approximate nearest neighbour queries. The query time is $F(d,k,S,\eps)\log n + F(d,k,S,\eps)^{-\log(\eps)}$, the expected preprocessing time is $F(d,k,S,\eps)n\log n$ and the space used is $F(d,k,S,\eps)n$, where $F(d,k,S,\eps)=O\left(2^{O(d)}k\Phi(S)\eps^{-1}\right)^k$, where $\Phi(S)$ denotes the spread of the set of vertices and edges of the curves in $S$.
\end{restatable}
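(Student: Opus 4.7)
The plan is to derive this corollary as an immediate specialization of Theorem~\ref{thm:girth_main} by trading the \textgirth{}-based quantity $\mathcal{G}(S)^{-1}$ for the spread $\Phi(S)$. First I would invoke Theorem~\ref{thm:girth_main} on the given curve set $S$ with parameter $\eps$; this already produces a $(1+\eps)$-ANN data structure whose preprocessing time, space usage, and query time are expressed in terms of
\[
F(d,k,S,\eps) \;=\; O\!\left(2^{O(d)}\,k\,\bigl(1+\mathcal{G}(S)^{-1}\eps^{-1}\bigr)\right)^{k}.
\]
The only task left is to rewrite $F$ in terms of $\Phi(S)$.

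Next I would appeal to Lemma~\ref{lem:girthspread}, which states that the \textgirth{} is lower bounded by the reciprocal of the spread of the vertex-edge set, i.e.\ $\mathcal{G}(S)\geq \Phi(S)^{-1}$, equivalently $\mathcal{G}(S)^{-1}\leq \Phi(S)$. Since $\Phi(S)\geq 1$ and $\eps^{-1}\geq 1$, the additive constant inside the parenthesis is absorbed up to a constant factor, giving $1+\mathcal{G}(S)^{-1}\eps^{-1} = O(\Phi(S)\eps^{-1})$. Raising to the $k$-th power and keeping the $2^{O(d)}k$ factor yields
\[
F(d,k,S,\eps) \;\leq\; O\!\left(2^{O(d)}\,k\,\Phi(S)\,\eps^{-1}\right)^{k},
\]
which is exactly the bound claimed in the corollary.

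Finally, I would observe that the three complexity quantities in Theorem~\ref{thm:girth_main} are all monotone non-decreasing in $F$: the preprocessing time and space are linear in $F$, the $F\log n$ term in the query time is linear in $F$, and the term $F^{-\log \eps} = F^{\log(1/\eps)}$ has a non-negative exponent since $\eps\leq 1$. Hence every upper bound on $F$ transfers directly to an upper bound on the data-structure guarantees, and the corollary follows. There is essentially no obstacle beyond this bookkeeping: the substance lies entirely in Theorem~\ref{thm:girth_main} and Lemma~\ref{lem:girthspread}, and the present statement simply exchanges the sharper but less transparent \textgirth{} parameter for the coarser but more familiar spread.
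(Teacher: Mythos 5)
Your proposal is correct and takes essentially the same route as the paper: invoke Theorem~\ref{thm:girth_main}, apply Lemma~\ref{lem:girthspread} to bound $\mathcal{G}(S)^{-1}$ by $O(\Phi(S))$, and absorb the additive $1$ using $\Phi(S)\geq 1$ and $\eps\leq 1$. The only minor imprecision is that Lemma~\ref{lem:girthspread} gives $\mathcal{G}(S)^{-1}=O(\Phi(S))$ rather than the literal inequality $\mathcal{G}(S)^{-1}\leq\Phi(S)$, but this is harmless under the $O(\cdot)$ already present in $F$.
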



In the special case that all curves in $S$ are $c$-packed for some constant $c>0$, that is the length of the intersection of the curve with any ball is at most $c$ times the balls radius, the parameter $F(d,k,S,\eps)$ is instead in $O\left(2^{O(d)}(1+\mathcal{G}(S)^{-1}\eps^{-1})\right)^k$, or, more pessimistically, in $O\left(2^{O(d)}\Phi(S)\eps^{-1}\right)^k$.

In Section~\ref{sec:generalization} we also discuss sufficient conditions for our approach to generalize to other settings: when a space of unbounded doubling dimension is approximated by a well-suited family of subspaces of bounded doubling dimension.

\subsection{Technical overview}

In Section~\ref{sec:spaces}, we introduce a subspace $\bX^*$ of $(\bX^{d,k},\df)$  and show how the subspace relates to $(\bX^{d,k},\df)$. A significant part of our work  is concerned with analyzing properties of this subspace $\bX^*$.
In Section~\ref{sec:bounded}, we present the analysis of the upper bound on the doubling constant and dimension of $\bX^*$. We furthermore extend this analysis to the special case that the curves in the ambient space are $c$-packed, resulting in an improvement in the upper bound. 
In Section~\ref{sec:lowerbound}, we extend the lower bound construction from~\cite{driemelClustering} to argue that our bounds on the doubling dimension of $\bX^*$ are almost tight.
In Section~\ref{sec:ANN}, we then use the bound on the doubling dimension of $\bX^*$ to construct an $(1+\eps)$-ANN data structure based on the work of Har-Peled and Mendel~\cite{harpeledFast}. In Section~\ref{sec:generalization}, we then generalize these results to present a general framework, which specifies how one can extend $(1+\eps)$-ANN data structures for spaces of bounded doubling dimension to spaces of unbounded doubling dimension.

\section{Curve spaces}\label{sec:spaces}

\begin{definition}[doubling constant and dimension]
Let $(\cM,\dd_\cM)$ be a metric space. Define the \textbf{$r$-ball} in $\cM$ by $\ball^{\cM}_r(x)=\{y\in \cM\mid \dd_\cM(x,y)\leq r\}$, as the closed ball of radius $r>0$ centered at $x$. The \textbf{doubling constant} of $(\cM,\dd_\cM)$ is defined as the minimal number $\nu$, such that for any $x\in \cM$ and any $r>0$ the ball $\ball_r^\cM(x)$ of radius $r$ is contained in the union of at most $\nu$ balls with radius $r/2$. The \textbf{doubling dimension} of $(\cM,\dd_\cM)$ is defined as $\log_2(\nu)$.
\end{definition}
We may omit the metric space in the notation of a ball whenever the metric space is clear, writing $\ball_r(x)$ instead of $\ball^\cM_r(x)$ for $x\in\cM$.

It turns out that for $k\geq 3$ the doubling dimension of $(\xx{d}{k},\df)$ is unbounded~\cite{driemelClustering}. A straightforward modification to their construction yields the following theorem.
\begin{theorem}\label{thm:doublingunbounded}
The doubling constant of $(\xl{d}{k}{\Lambda},\df)$ is unbounded for any $k\geq 3$ and $\Lambda>0$.
\end{theorem}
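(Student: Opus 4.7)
The plan is to adapt the lower-bound construction for $(\xx{d}{k}, \df)$ given in~\cite{driemelClustering}, showing that restricting to edges of length at most $\Lambda$ does not prevent the doubling constant from being unbounded. That construction provides, for every $N \in \bN$, a center curve $c^{\star}$ together with $N$ ``satellite'' curves $c_1, \ldots, c_N$, each of complexity at most $k \geq 3$, all sitting in a Fréchet ball of some radius $r$ around $c^{\star}$ but with pairwise Fréchet distances strictly greater than $r$. This witnesses a doubling constant of at least $N$ in $(\xx{d}{k}, \df)$, and my task is merely to transport this witness into $\xl{d}{k}{\Lambda}$.

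My approach is a global rescaling. Let $\Lambda^{\star}$ denote the largest edge length appearing across the $N+1$ curves in the original configuration, and set $\alpha := \Lambda/\Lambda^{\star}$. Scale every vertex of every curve by $\alpha$. Since the Fréchet distance is positively homogeneous under dilations of the ambient Euclidean space, namely $\df(\alpha X, \alpha Y) = \alpha\,\df(X, Y)$ for $\alpha > 0$, the rescaled configuration still consists of $N$ curves contained in a Fréchet ball of radius $\alpha r$ around the rescaled center, with pairwise distances still strictly greater than $\alpha r$. By construction every edge now has length at most $\Lambda$, so the whole configuration lies inside $\xl{d}{k}{\Lambda}$, witnessing a doubling constant of at least $N$ there as well.

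Because $N$ can be chosen arbitrarily large, the doubling constant of $\xl{d}{k}{\Lambda}$ is unbounded. The main potential worry is that the rescaling factor $\alpha$ may have to shrink with growing $N$, so the witnessing balls live at progressively smaller scales; but this is harmless, since the doubling constant is a supremum over balls at all scales. No further technical difficulty is expected, as the rescaling is a similarity of $\bR^d$ under which $\df$ is naturally equivariant, and the bookkeeping needed to carry the configuration of~\cite{driemelClustering} through this similarity is essentially trivial.
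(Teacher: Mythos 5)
Your proof is correct and matches the paper's intent: the paper does not spell out a proof of Theorem~\ref{thm:doublingunbounded} at all, stating only that it follows from a ``straightforward modification'' of the construction in~\cite{driemelClustering}, and the rescaling you describe is the natural such modification. The key observations---that $\df$ scales linearly under dilations of $\bR^d$, that any finite configuration has a finite maximal edge length and thus can be scaled into $\xl{d}{k}{\Lambda}$, and that the doubling constant is a supremum over all scales so it is harmless that the witnessing balls shrink as $N$ grows---are all correct and exactly what is needed. The only detail you gloss over (as does the paper) is that the construction in~\cite{driemelClustering} is stated for curves of complexity $3$ in one dimension, so for general $k\geq 3$ and $d\geq 1$ one should note that the configuration embeds isometrically into $\bR^d$ and that padding each curve to complexity $k$ by subdividing an edge leaves the Fréchet distances unchanged and only decreases edge lengths; this is immediate but worth a sentence.
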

Theorem~\ref{thm:doublingunbounded} motivates the search for a subspace $(\bX^*,\df)$ of $(\xl{d}{k}{\Lambda},\df)$ of bounded doubling dimension. To answer $(1+\eps)$-ANN queries in $\xl{d}{k}{\Lambda}$, the ansatz is to map the input curves to $(\bX^*,\df)$ and answer queries in this subspace. With this approach, the quality guarantee of the $(1+\eps)$-ANN queries decreases by an additional additive factor, roughly depending on the distortion of the map between $(\xl{d}{k}{\Lambda},\df)$ and $(\bX^*,\df)$. 

\begin{definition}[$(\mu,\eps)$-curves]
For any $\eps>0$ and $\mu\in\bN$ define the space of $(\mu,\eps)$-curves in $\bX^{d,k}$ as the subspace of $(\xx{d}{k},\df)$ induced by the set of polygonal curves in $\bX^{d,k}$ whose edge lengths are all exact multiples of $\eps$. We further require the edge lengths to be bounded by $\mu\eps$. The space of $(\mu,\eps)$-curves in $\bX^{d,k}$ is a natural subspace of $(\xl{d}{k}{\mu\eps},\df)$.
\end{definition}

We may abuse notation slightly, and not specify the ambient space $\xx{d}{k}$ of the space of $(\mu,\eps)$-curves, if the ambient space is clear. We may likewise write $\cM$ when talking about the metric space $(\cM,\dd_\cM)$, if the metric is clear.

\begin{figure}
    \centering
    \includegraphics[width=\textwidth]{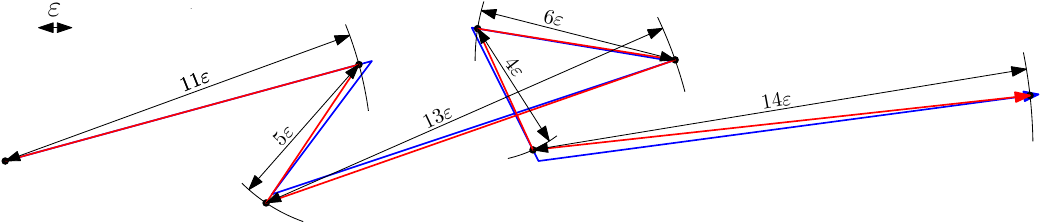}
    \caption{Example of a curve $P\in\xl{d}{k}{\Lambda}$ in blue, and an $\eps$-curve close to $P$ resulting from Lemma~\ref{lem:simplification} in red.}
    \label{fig:simplification}
\end{figure}

\begin{lemma}\label{lem:simplification}
Let $P\in\xl{d}{k}{\Lambda}$ be a polygonal curve and  $\eps>0$. We can construct a $({\lceil\Lambda/\eps\rceil+1},{\eps})$-curve $P'$ in $ \xx{d}{k}$ such that $\df(P,P')\leq\eps/2$ in $O(k\log(\Lambda/\eps))$ time.
\end{lemma}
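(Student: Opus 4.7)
The plan is to construct $P' = (p_1', \ldots, p_k')$ greedily in one pass, starting with $p_1' := p_1$. Assuming $p_{i-1}'$ has been placed, define $\ell_i' := \eps \cdot \mathrm{round}\bigl(\|p_{i-1}' - p_i\|/\eps\bigr)$ and let $p_i'$ be the point on the ray from $p_{i-1}'$ through $p_i$ at distance $\ell_i'$ from $p_{i-1}'$ (with the convention $p_i' := p_{i-1}'$ if $p_{i-1}' = p_i$). By construction, every edge of $P'$ has length equal to a nonnegative multiple of $\eps$.

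The central invariant to maintain is $\|p_i - p_i'\| \leq \eps/2$, which I would establish by induction on $i$. The base case $p_1 = p_1'$ is immediate. For the inductive step, note that both $p_i$ and $p_i'$ lie on the ray from $p_{i-1}'$ through $p_i$, at radii $\|p_{i-1}' - p_i\|$ and $\ell_i'$ respectively; since $\ell_i'$ is the nearest multiple of $\eps$ to the former, the two radii differ by at most $\eps/2$, which yields the invariant directly. For the edge-length bound, the triangle inequality together with the inductive hypothesis gives $\|p_{i-1}' - p_i\| \leq \Lambda + \eps/2$, so after rounding $\ell_i' \leq \Lambda + \eps \leq (\lceil \Lambda/\eps\rceil + 1)\eps$, matching the required cap.

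For the Fréchet bound, I would parametrize both $P$ and $P'$ so that the $j$-th vertex is visited at time $(j-1)/(k-1)$. On each interval corresponding to an edge, $P(t)$ and $P'(t)$ are convex combinations of the corresponding vertex pairs using identical weights, so $\|P(t) - P'(t)\| \leq \max_j \|p_j - p_j'\| \leq \eps/2$, yielding $\df(P, P') \leq \eps/2$. The running time is driven by one distance evaluation, one rounded division, and one scalar-multiple-of-a-vector computation per vertex, with intermediate integers lying in $\{0, \ldots, \lceil\Lambda/\eps\rceil + 1\}$; this amounts to $O(k\log(\Lambda/\eps))$ overall in the standard bit model.

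The main obstacle is ensuring that the displacement $\|p_i - p_i'\|$ does not accumulate across iterations, which is exactly why $p_i'$ must be placed on the ray emanating from the already-adjusted vertex $p_{i-1}'$ rather than from $p_{i-1}$: this specific placement forces $p_i$ itself to lie on that same ray, collapsing the displacement into a purely one-dimensional rounding error bounded by $\eps/2$, independent of $i$.
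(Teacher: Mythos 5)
Your proposal is correct and follows essentially the same construction and argument as the paper: round each successive edge length to the nearest multiple of $\eps$, place $p'_i$ on the ray from $p'_{i-1}$ through $p_i$, observe that this makes $\|p_i - p'_i\|\leq\eps/2$ a one-dimensional rounding error that does not accumulate, and conclude the Fr\'echet bound via the linear-interpolation matching. The only cosmetic difference is in the running-time justification — the paper finds the nearest multiple via binary search, while you appeal directly to bounded-bit-length division — but both yield $O(\log(\Lambda/\eps))$ per vertex.
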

\begin{proof}
Let $p_1,\ldots,p_k$ be the vertices of $P$ and $P'=\emptyset$. We begin by adding $p'_1=p_1$ to $P'$. Now assume $p'_{i-1}$ is the last vertex of $P'$. Compute the value $\mu_i$, such that the magnitude $|\mu_i\eps-\|p_i-p'_{i-1}\||$ is minimal. Then add $p'_i = p'_{i-1}+(\mu_i\eps)\frac{p_i-p'_{i-1}}{\|p_i-p'_{i-1}\|}$ to $P'$. Note that by construction $\|p_i-p'_{i}\|\leq\eps/2$. Hence $d_F(P,P')\leq\eps/2$. The length of the edges of $P'$ are bounded by $(\lceil\Lambda/\eps\rceil+1)\eps$. Indeed, $\|p_i-p'_{i}\|\leq\eps/2$ and $\|p_i-p_{i-1}\|\leq\Lambda$ imply that  $\|p'_i-p'_{i-1}\|\leq\Lambda+\eps$. For the running time, for every $i\leq n$ the value $\mu_i$ such that the magnitude $|\mu_i\eps-\|p_i-p'_{i-1}\||$ is minimal can be identified in $O(\log(\Lambda/\eps))$ time. This is done by first identifying the smallest power of $2$ larger than $\|p_i-p'_{i-1}\|/\eps$ and then binary searching over the integer multiples of $\eps$ up to this power of $2$. As we do this once for every edge, the claimed running time follows.
\end{proof}

The additive error incurred by mapping $\xl{d}{k}{\Lambda}$ to the space of $(M,\eps)$-curves in $\xx{d}{k}$ for $M\geq \lceil\Lambda/\eps\rceil+1$ depends only on the distortion of the map. The Gromov-Hausdorff distance is a related measure of the distance between metric spaces. For $\mu\rightarrow\infty$ the space of $(\mu,\eps)$-curves in $\xx{d}{k}$ will have (depending on $\eps$) small Gromov-Hausdorff distance to the ambient space of curves. We map any element from $\xx{d}{k}$ into this subspace by rounding the length of every edge to a multiple of $\eps$, thereby moving the vertices by a small amount each time (refer to Figure \ref{fig:simplification}).

\begin{definition}[Gromov-Hausdorff distance]
The Gromov-Hausdorff distance is a distance measure on metric spaces. Let $\cM$ and $\mathcal{N}$ be two metric spaces. Then the Gromov-Hausdorff distance is defined as
\[\mathrm{d}_{GH}(\cM,\mathcal{N}) = \inf_\mathcal{Z}\{\mathrm{d}^{\mathcal{Z}}_H(f(\cM),g(\mathcal{N}))\mid f:\cM\rightarrow\mathcal{Z},g:\mathcal{N}\rightarrow\mathcal{Z}\text{ isometric embeddings}\},\]
where $\mathcal{Z}$ ranges over metric spaces and $\mathrm{d}^{\mathcal{Z}}_H(X,Y)$ denotes the Haus\-dorff distance of two sets in the metric space $\mathcal{Z}$ which is defined as $\max\{\sup_{x\in X}\inf_{y\in Y}d_{\mathcal{Z}}(x,y),\sup_{y\in Y}\inf_{x\in X}d_{\mathcal{Z}}(x,y)\}$. 
\end{definition}

\begin{corollary}
Let $d,k$ and $\eps>0$ be given, $\mu\rightarrow\infty$, and denote by $C_\mu$ the space of $(\mu,\eps)$-curves in $\xx{d}{k}$. Then $\mathrm{d}_{GH}(\xx{d}{k},C_\mu)\rightarrow\eps/2$.
\end{corollary}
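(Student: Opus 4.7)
The plan is to bound the Gromov--Hausdorff distance by the Hausdorff distance in the natural identity embedding and then apply Lemma~\ref{lem:simplification}. Since $C_\mu$ is a metric subspace of $(\xx{d}{k}, \df)$, I would take the ambient metric space $\mathcal{Z}$ in the definition of Gromov--Hausdorff distance to be $\xx{d}{k}$ itself, with $f$ the identity on $\xx{d}{k}$ and $g$ the inclusion $C_\mu \hookrightarrow \xx{d}{k}$. With this choice the Gromov--Hausdorff distance is bounded above by the Hausdorff distance $\mathrm{d}_H^{\xx{d}{k}}(\xx{d}{k}, C_\mu)$.

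One direction of the Hausdorff distance is immediate: every $Q \in C_\mu$ also lies in $\xx{d}{k}$, so $\sup_{Q \in C_\mu} \inf_{P \in \xx{d}{k}} \df(P, Q) = 0$. For the other direction, $\sup_{P \in \xx{d}{k}} \inf_{Q \in C_\mu} \df(P, Q)$, I would apply Lemma~\ref{lem:simplification} pointwise: given $P \in \xx{d}{k}$ whose maximum edge length is $\Lambda_P$, whenever $\mu \geq \lceil \Lambda_P/\eps \rceil + 1$ the lemma constructs a $(\mu, \eps)$-curve $Q$ with $\df(P, Q) \leq \eps/2$. Hence for each individual $P$, $\inf_{Q \in C_\mu} \df(P, Q) \leq \eps/2$ once $\mu$ is sufficiently large. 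Passing to the limit gives $\limsup_{\mu \to \infty}\mathrm{d}_{GH}(\xx{d}{k}, C_\mu) \leq \eps/2$.

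Equivalently, one can phrase this via a correspondence $R_\mu = \{(P, \sigma_\mu(P)) : P \in \xx{d}{k}\} \cup \{(Q, Q) : Q \in C_\mu\}$, where $\sigma_\mu(P)$ is the simplification guaranteed by Lemma~\ref{lem:simplification} (defined once $\mu$ is large enough with respect to $P$). By the triangle inequality the distortion $|\df(P,P') - \df(\sigma_\mu(P), \sigma_\mu(P'))|$ is at most $\df(P, \sigma_\mu(P)) + \df(P', \sigma_\mu(P')) \leq \eps$, which yields the same bound of $\eps/2$ on $\mathrm{d}_{GH}$. For the matching lower bound one would exhibit a polygonal curve containing an edge whose length differs from every integer multiple of $\eps$ by exactly $\eps/2$, so that any $(\mu,\eps)$-approximation has to change that edge length by $\eps/2$.

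The main obstacle is twofold. First, $\xx{d}{k}$ contains curves with arbitrarily long edges that no fixed $\mu$ can approximate within $\eps/2$; this is handled by working asymptotically in $\mu$, since the statement is already phrased as a limit. Second, establishing the matching lower bound is subtle because for simple examples such as a single segment of length $(j+\tfrac12)\eps$ one can translate the approximant to a symmetric position and thereby reduce the Fréchet cost, so a careful geometric construction is needed to force the lower bound to actually attain $\eps/2$ rather than a smaller value.
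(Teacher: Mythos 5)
Your upper-bound argument is essentially the paper's own proof: bound $\mathrm{d}_{GH}$ by the Hausdorff distance in the natural identity embedding and apply Lemma~\ref{lem:simplification} pointwise, letting $\mu$ grow. The paper proves exactly this and nothing more.

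What you add is a legitimate concern that the paper glosses over: the corollary asserts convergence \emph{to} $\eps/2$, which requires a matching lower bound, and the paper's proof supplies only the upper bound. Your observation that a single edge of length $(j+\tfrac12)\eps$ does not witness the full $\eps/2$ (because centering the approximant halves the Fréchet cost to $\eps/4$) is correct and suggests that, as stated, the convergence-to-$\eps/2$ claim is not actually established by the paper and may even be false; the honest content of the corollary is only $\limsup_{\mu\to\infty}\mathrm{d}_{GH}(\xx{d}{k},C_\mu)\leq\eps/2$. One further point you flag but do not fully resolve, and which the paper is equally silent on, is uniformity: for any fixed $\mu$ the quantity $\sup_{P\in\xx{d}{k}}\inf_{Q\in C_\mu}\df(P,Q)$ is infinite (take a single edge far longer than $(k-1)\mu\eps$), so the identity embedding yields $\mathrm{d}_H^{\xx{d}{k}}(\xx{d}{k},C_\mu)=\infty$ for each fixed $\mu$, and "passing to the limit" in the pointwise bound does not literally control $\mathrm{d}_{GH}(\xx{d}{k},C_\mu)$ at any finite $\mu$. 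A rigorous version would need either a uniform restriction on edge lengths (i.e., working in $\xl{d}{k}{\Lambda}$) or a more careful choice of embedding; since the paper itself sidesteps this, your proposal is on the same footing.
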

\begin{proof}
For every $P\in\xx{d}{k}$ and $\eps>0$ there is a finite integer $M<\infty$, such that $P\in\xl{d}{k}{M\eps}$. Then by Lemma \ref{lem:simplification} there is a $P'\in C_{M+1}$ with $\df(P,P')\leq\eps/2$. As $C_\mu$ is a natural subspace of $\xx{d}{k}$ for all $\mu\in\bN$, the claim holds.
\end{proof}




\section{Bounding the doubling dimension of the space of  $(\mu,\eps)$-curves}\label{sec:bounded}

In this section, we study the doubling dimension of the space of $(\mu,\eps)$-curves in $\xx{d}{k}$. 
Unfortunately, our bound is non-constructive. As such, it does not provide a doubling oracle that, for a given ball of radius $r$ in the metric space, outputs a set of balls of radius $r/2$ which cover the ball of radius $r$.

\subsection{Properties of the Euclidean space}

Before diving into the analysis of the doubling dimension of the space of $(\mu,\eps)$-curves, we begin by analyzing properties of the ambient space $\bR^d$. For this we often inspect so called $\Delta$-neighbourhoods of subsets of $\bR^d$. For any subset $A$ the \textbf{$\Delta$-neighbourhood} of $A$ is defined by $N_\Delta(A)=\{x\in \bR^d \mid \exists a\in A : d(x,a)\leq\Delta\}$. Note that the $\Delta$-neighbourhood of a single point $x$ coincides with a ball of radius $\Delta$ centered at $x$.

It is well-known that the doubling dimension of $\bR^d$ under the Euclidean norm is in $\Theta(d)$.
We prove a more general statement which relates the volume of some arbitrary set to the amount of balls needed to cover it.
For this, we denote by $\lambda^d$ the Lebesgue measure in $\bR^d$ and by $\lambda^d(A)$ the \textbf{volume} of a (measurable) set $A\subset\bR^d$. 

\begin{lemma}\label{lem:rdcover}
Let $A\subset\bR^d$ be a bounded set, and let $r>0$ be fixed. Then there is a set of points $C\subset A$ of cardinality $\lceil\lambda^d(N_{r/2}(A))/V^d_{r/2}\rceil$ such that $A\subset\bigcup_{c\in C}\ball_{r}(c)$, where $V^d_{r/2}$ denotes the volume of a $d$-dimensional ball of radius $r/2$.
\end{lemma}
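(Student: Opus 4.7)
The plan is to use a standard greedy packing argument. I would take $C$ to be a maximal subset of $A$ (in the inclusion ordering) such that any two distinct points of $C$ are at distance strictly greater than $r$, and then bound its size via a volume packing argument.

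First I would argue the covering property: if some $a \in A$ satisfied $\dd(a,c) > r$ for every $c \in C$, then $C \cup \{a\}$ would still be an $r$-separated subset of $A$, contradicting the maximality of $C$. Hence every $a \in A$ has some $c \in C$ with $\dd(a,c) \leq r$, which is exactly $A \subset \bigcup_{c \in C}\ball_{r}(c)$.

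Next I would argue the size bound. Since any two points of $C$ are at distance $> r$, the open Euclidean balls $\mathrm{int}\,\ball_{r/2}(c)$ for $c \in C$ are pairwise disjoint. Because $c \in A$, each such ball is contained in the $r/2$-neighbourhood $N_{r/2}(A)$. The disjointness plus containment give
\[
|C|\cdot V^d_{r/2} \;=\; \sum_{c \in C} \lambda^d\bigl(\ball_{r/2}(c)\bigr) \;\leq\; \lambda^d\bigl(N_{r/2}(A)\bigr),
\]
so $|C|\leq \lambda^d(N_{r/2}(A))/V^d_{r/2}$, and since $|C|$ is an integer, $|C| \leq \lceil \lambda^d(N_{r/2}(A))/V^d_{r/2}\rceil$. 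The same inequality also shows that every $r$-separated subset of $A$ has finite cardinality bounded uniformly (using that $A$, and hence $N_{r/2}(A)$, is bounded and therefore has finite Lebesgue measure), which makes the existence of a maximal such $C$ immediate (e.g., pick $C$ of maximum size among $r$-separated subsets of $A$).

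I do not expect a real obstacle here; the only point where one must be slightly careful is the distinction between strictly and non-strictly $r$-separated sets, since one needs strict separation to get disjointness of the open $r/2$-balls, but needs the covering property $\dd(a,c)\leq r$ (non-strict) to make the maximality argument go through. Using strict separation $> r$ in the definition of $C$ handles both ends cleanly.
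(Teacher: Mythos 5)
Your proposal is correct and follows essentially the same route as the paper: the paper constructs $C$ by a greedy iteration (repeatedly adding a point of $A$ not covered by the current $r$-balls), which produces exactly the kind of maximal $r$-separated subset you describe, and then bounds $|C|$ by the same packing argument with disjoint $r/2$-balls inside $N_{r/2}(A)$. Your remark about using strict separation so that the (open) $r/2$-balls are genuinely disjoint is a small refinement over the paper's phrasing, but it is the same idea.
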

\begin{proof}
We construct the set $C$ greedily. For this we start with $C=\emptyset$, and then iteratively add any point from $A\setminus\bigcup_{c\in C}\ball_{r}(c)$, until $A\subset\bigcup_{c\in C}\ball_{r}(c)$. As any two points in $C$ have a distance of at least $r$, balls centered at points of $C$ with radius $r/2$ are disjoint and clearly contained in $N_{r/2}(A)$. Thus the number of points in $C$ is bound by $\lceil\lambda^d(N_{r/2}(A))/V^d_{r/2}\rceil$.
\end{proof}
\begin{lemma}\label{lem:circlecover}
For any $r>0$ and $c>1$, any ball $\ball_r(p)\subset\bR^d$ can be covered by $O((2c+1)^d)$ balls of radius $r/c$. 
\end{lemma}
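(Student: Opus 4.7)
The plan is to apply Lemma~\ref{lem:rdcover} directly, with the roles of the radii chosen so that the covering balls have radius $r/c$.

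First I would instantiate Lemma~\ref{lem:rdcover} with the set $A=\ball_r(p)$ and with its parameter (there called $r$) set to $r/c$. This immediately yields a covering of $\ball_r(p)$ by balls of radius $r/c$ whose number is bounded by
\[
\left\lceil \frac{\lambda^d\!\left(N_{r/(2c)}(\ball_r(p))\right)}{V^d_{r/(2c)}} \right\rceil .
\]

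Next I would compute the numerator. Since the $\Delta$-neighbourhood of a ball is simply a larger concentric ball, we have $N_{r/(2c)}(\ball_r(p))=\ball_{r+r/(2c)}(p)$, whose volume is $V^d_{r+r/(2c)}$. Using the fact that $V^d_\rho$ scales as $\rho^d$, the ratio simplifies as
\[
\frac{V^d_{r+r/(2c)}}{V^d_{r/(2c)}} = \left(\frac{r+r/(2c)}{r/(2c)}\right)^d = (2c+1)^d,
\]
giving the claimed bound of $O((2c+1)^d)$ balls.

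There is no real obstacle here; the statement is essentially a direct corollary of Lemma~\ref{lem:rdcover} once one recognises that the $\Delta$-neighbourhood of a ball is again a ball, and that the volume ratio of concentric Euclidean balls depends only on the ratio of their radii. The only minor point worth noting is that the hypothesis $c>1$ is not strictly needed for the volume computation, but it ensures that $r/c<r$ so that the statement is meaningful as a genuine refinement of the original ball.
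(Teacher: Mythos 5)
Your proof is correct and follows essentially the same route as the paper: instantiate Lemma~\ref{lem:rdcover} with $A=\ball_r(p)$ and covering radius $r/c$, observe $N_{r/(2c)}(\ball_r(p))=\ball_{r(2c+1)/(2c)}(p)$, and take the ratio of volumes using the $\rho^d$ scaling. Nothing to add.
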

\begin{proof}
This follows from the classical result that 
$V^d_r=\frac{\pi^{d/2}}{\Gamma(d/2+1)}r^d$, where $\Gamma$ denotes the Gamma-function \cite{parksTheVotUnB}. As $N_{r/2c}(\ball_r(p)) = \ball_{r(2c+1)/2c}(p)$, Lemma~\ref{lem:rdcover} implies that any $\ball_r(p)$ can be covered with $\lceil(r(2c+1)/2c)^d/(r/2c)^d\rceil = O((2c+1)^d)$ balls of radius $r/c$.
\end{proof}

Later on, we will use this lemma to retrieve a set of points, namely the centers of the balls used in such a covering, as candidates for vertices of curves that will be centers of balls in $\bX^{d,k}$.

When analyzing the doubling dimension, we often consider what an edge might look like that has Fréchet distance at most $\Delta>0$ to a subcurve of an input curve. The basic tool we use for this analysis is the observation that the edge under consideration then is a $\Delta$-stabber of the vertices (and indeed any ordered set of points along the subcurve) of the subcurve.

\begin{definition}[$\Delta$-stabber]
Let an ordered set of points $(p_1,\ldots,p_n)$ in $\bR^d$ be given. A polygonal curve $l:[0,1]\rightarrow\bR^d$ between two points $a,b\in\bR^d$ is called a $\Delta$-stabber of $(p_1,\ldots,p_n)$ if there are values $0\leq t_1\leq \ldots \leq t_n\leq 1$ such that $\|l(t_i)-p_i\|\leq\Delta$ for all $1\leq i\leq n$.
\end{definition}

The notion of $\Delta$-stabbers has been introduced by Guibas et al.~\cite{Guibas1994ApproximatingPSM}, and is closely related to the Fréchet distance. Any edge that has Fréchet distance at most $\Delta$ to a polygonal curve defined by vertices $(p_1,\ldots)$ is a $\Delta$-stabber of the ordered point set $(p_1,\ldots)$. Similarly, a $\Delta$-stabber of the ordered point set $(p_1,\ldots)$ contains an edge that has Fréchet distance at most $\Delta$ to the polygonal curve defined by the vertices $(p_1,\ldots)$.


\begin{observation}\label{obs:stabber}
Let a polygonal curve $P$ and $\Delta>0$ be given. Let $e=\overline{p\,q}$ be an edge such that for given $0\leq s\leq t\leq1$ the Fréchet distance $\df(P[s,t],e)$ is at most $\Delta$. Then for any $s\leq m\leq t$ the edge $e$ is a $\Delta$-stabber of $(P(s),P(m),P(t))$.
\end{observation}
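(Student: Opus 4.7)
The approach is to unpack the definition of the Fréchet distance and produce, from an optimal (or near-optimal) pair of reparametrizations, three well-ordered parameter values on $e$ that witness the $\Delta$-stabber property for $(P(s),P(m),P(t))$.

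First I would invoke Definition~\ref{def:frechet} to obtain continuous, non-decreasing, surjective reparametrizations $f,g:[0,1]\rightarrow[0,1]$ such that $\|P[s,t](f(\tau))-e(g(\tau))\|\leq\Delta$ for every $\tau\in[0,1]$. For polygonal curves the infimum in the Fréchet distance is attained, but if one wishes to avoid this point, a standard compactness argument on a sequence of reparametrizations realizing Fréchet distance $\leq\Delta+1/n$ yields the same bound in the limit.

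Next I would reparametrize $P[s,t]$ linearly on $[0,1]$ so that the value $0$ corresponds to $P(s)$, the value $1$ to $P(t)$, and some intermediate $\mu=(m-s)/(t-s)\in[0,1]$ to $P(m)$ (the cases $m\in\{s,t\}$ are immediate). Because $f$ is continuous, non-decreasing, and surjective onto $[0,1]$, it must satisfy $f(0)=0$ and $f(1)=1$, and the preimage $f^{-1}(\mu)$ is a nonempty closed subinterval of $[0,1]$. Set $\tau_1=0$, pick any $\tau_2\in f^{-1}(\mu)$, and set $\tau_3=1$; monotonicity of $f$ forces $\tau_1\leq\tau_2\leq\tau_3$.

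Finally define $t_i=g(\tau_i)$ for $i=1,2,3$. Monotonicity of $g$ gives $0\leq t_1\leq t_2\leq t_3\leq 1$, and by construction
\[
\|e(t_i)-p_i\|=\|e(g(\tau_i))-P[s,t](f(\tau_i))\|\leq\Delta,
\]
where $p_1=P(s)$, $p_2=P(m)$, $p_3=P(t)$. This is exactly the $\Delta$-stabber condition. The main obstacle, such as it is, amounts to the bookkeeping of endpoints of $f$ and $g$, which is handled by the surjectivity-plus-monotonicity observation; otherwise the proof is a direct translation of the Fréchet matching into a stabbing witness.
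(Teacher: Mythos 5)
Your proof is correct and is essentially the intended argument; the paper labels this an observation and gives no proof, instead pointing to the general correspondence between $\Delta$-stabbers and Fréchet matchings attributed to Guibas et al.\ in the paragraph immediately above. You correctly identify the only two points requiring care — that a continuous, non-decreasing surjection of $[0,1]$ fixes $0$ and $1$ (so the matching really does pair $P(s)$ with some point at or before the point paired with $P(m)$, which in turn is at or before the point paired with $P(t)$), and that the infimum in the Fréchet definition is either attained for polygonal curves or can be sidestepped by a compactness argument on the ordered triples $(t_1,t_2,t_3)\in[0,1]^3$ — and both are handled adequately.
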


\subsection{Packing the metric ball}

The result we now want to prove is the following theorem. Proving it constitutes the main bulk of work in this section.

\begin{theorem}\label{thm:main}
Let $k,\mu,d\in\bN$ and $\eps>0$. 
The doubling constant of the space of $(\mu,\eps)$-curves in $\xx{d}{k}$ is bounded by $O(43^dk\mu)^k$ and thus the doubling dimension of the space of $(\mu,\eps)$-curves is bounded by $O(k(d+\log(k\mu)))$.
\end{theorem}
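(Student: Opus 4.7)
The aim is to show that any metric ball $\ball_r(P)$ of radius $r$ in the space of $(\mu,\eps)$-curves admits a cover by at most $O(43^dk\mu)^k$ balls of radius $r/2$. The exponent $k$ in this bound strongly suggests a vertex-by-vertex argument: each candidate covering curve is built by choosing one of $O(43^dk\mu)$ positions for each of its $k$ vertices. The doubling dimension bound $O(k(d+\log(k\mu)))$ follows immediately by taking $\log_2$ of the doubling constant, so the focus is on bounding the doubling constant.

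The plan is as follows. For any $(\mu,\eps)$-curve $Q\in\ball_r(P)$ with vertices $q_1,\dots,q_k$, the definition of the Fr\'echet distance together with monotonicity of the optimal reparametrisations places each $q_i$ within distance $r$ of some point of $P$; in particular $q_i\in N_r(P)$. The neighbourhood $N_r(P)$ decomposes as the union of $r$-tubes around the $k-1$ edges of $P$, each of length at most $\mu\eps$, together with $k$ spherical caps. A direct application of Lemma~\ref{lem:rdcover}, combined with the volume comparison used in Lemma~\ref{lem:circlecover}, covers $N_r(P)$ by $O(c^dk(\mu\eps/r+1))$ balls of radius $r/c$, for an absolute constant $c$ that will ultimately give the factor $43^d$.

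I would then enumerate all $k$-tuples $(c_1,\dots,c_k)$ of centres of these covering balls, snap each tuple to a valid $(\mu,\eps)$-curve $\hat{Q}$ via Lemma~\ref{lem:simplification}, and discard tuples whose snapped curve overshoots the $\mu\eps$ edge length bound. Given $Q\in\ball_r(P)$, choosing each $c_i$ to be the centre of a covering ball that contains $q_i$ yields a candidate $\hat Q$ with
\[\df(Q,\hat{Q})\ \leq\ \ddf(Q,\hat{Q})\ \leq\ \max_i\|q_i-c_i\|+\tfrac{\eps}{2},\]
by Observation~\ref{obs:dfddf} and the snapping error bound of Lemma~\ref{lem:simplification}. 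Choosing $c$ sufficiently large forces the right-hand side below $r/2$, certifying the cover.

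The main obstacle is the small-$r$ regime. For $r\gtrsim\eps$ the factor $\mu\eps/r+1$ is at most $O(\mu)$ and the advertised bound follows directly. For $r<\eps$ the raw Euclidean volume bound degrades, and one must instead exploit the combinatorial rigidity of $(\mu,\eps)$-curves: consecutive vertices are forced onto one of at most $\mu$ discrete spherical shells around the preceding vertex, so that the number of feasible vertex configurations remains $O(k\mu)$ at the required covering resolution, independently of $r$. Combining the Euclidean bound for large $r$ with this combinatorial bound for small $r$, and absorbing the constant overhead from Lemma~\ref{lem:simplification} possibly overshooting the $\mu\eps$ threshold into the $43^d$ factor, yields the uniform $O(43^dk\mu)^k$ bound claimed in the theorem.
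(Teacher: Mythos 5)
Your high-level shape (enumerate vertex positions one at a time, incur a factor $O(43^dk\mu)$ per vertex, take a $k$-fold product) matches the structure of the paper's Lemma~\ref{lem:main}, but two steps of your argument would genuinely fail. The first is the candidate set for each vertex: you take all of $N_r(P)$, covered by $O(c^dk(\mu\eps/r+1))$ balls of radius $r/c$, and this count diverges as $r\to 0$. You notice the problem, but the ``combinatorial rigidity'' sentence is a gesture, not a proof, and in fact the correct fix is a nontrivial geometric fact the proposal does not contain. The paper's Lemmas~\ref{lem:psld} and~\ref{lem:ppld} show that once the (approximate) start point $q_{i-1}^*$ of an edge and its discrete length $\lambda=m_{i-1}\eps$ are fixed, the locus $\cL_{\lambda,\Delta}(P,q_{i-1}^*)$ of feasible endpoints lies in at most $k$ balls of radius $5\Delta$: any such edge is a $\Delta$-stabber of $(P(s),P(t^*),P(t))$, with $P(t^*)$ the first point at arc distance $\geq\lambda-2\Delta$ from $P(s)$, and the stabbing geometry pins the endpoint to within $5\Delta$ of $P(t^*)$. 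This is what makes the per-vertex budget $O(k\mu\cdot 43^d)$ uniformly in $r$ (with $\mu$ counting discrete lengths, $k$ counting re-entries of $P$ into $\ball_\Delta(q_{i-1}^*)$, and $43^d$ from Lemma~\ref{lem:circlecover} with $c=4$). Simply observing $q_i\in N_r(P)$ and covering that neighbourhood, as you do, does not exploit the conditioning on $q_{i-1}^*$ and cannot give an $r$-independent bound.

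The second gap is the snapping. You round each candidate tuple to a $(\mu,\eps)$-curve via Lemma~\ref{lem:simplification} and absorb the error, but that lemma introduces an additive $\eps/2$ that cannot be driven below $r/2$ when $r<\eps$, no matter how large you take $c$. The paper's proof of Theorem~\ref{thm:main} is deliberately non-constructive exactly here (the paper flags this at the top of Section~\ref{sec:bounded}): for each $Q^*\in\mathcal{C}_P$ it asks whether \emph{some} $(\mu,\eps)$-curve $\widehat{Q^*}$ exists with $\df(Q^*,\widehat{Q^*})\leq\Delta/4$, keeping it if so and discarding $Q^*$ otherwise. If $Q$ was within $\Delta/4$ of $Q^*$ then $Q$ itself witnesses existence, so $\widehat{Q^*}$ is kept and the triangle inequality gives $\df(Q,\widehat{Q^*})\leq\Delta/2$. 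No explicit rounding is performed, and replacing this by Lemma~\ref{lem:simplification} breaks the bound in the small-radius regime.
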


Let $P$ be a $(\mu,\eps)$-curve in $\xx{d}{k}$. Our objective is to cover the $\Delta$-neighbourhood of $P$ with respect to $\df$ with balls of radius $\Delta/2$. We encounter the question of where, given $p\in\bR^d$, we may place a second point $q$ such that there is a subcurve of $P$ that is close to $\overline{p\,q}$. Indeed, for any curve $Q$ with $\df(P,Q)\leq\Delta$, the endpoint of any edge of $Q$ is potentially a candidate for such a $q$, depending only on the starting point of the edge in question.

\begin{definition}
Let $P$ be a polygonal curve and $\lambda\geq 0$ and $\Delta\geq 0$. For $s\in[0,1]$ define the locus of edge endpoints of edges close to subcurves of $P$ starting at the parameter $s$ as the set
\[L_{\lambda,\Delta}(P,s)=\left\{q\in\bR^d\middle|\exists p\in\bR^d \text{ and } \exists t\in[s,1] \text{ with } \|p-q\|=\lambda \text{, }\df(P[s,t],\overline{p\,q})\leq\Delta\right\}.\]
\end{definition}

The following lemma motivates discretising the lengths of edges, as for a fixed length $\lambda$ the set $L_{\lambda,\Delta}(P,s)$ is contained in a single ball of constant size. 

\begin{figure}
    \centering
    \includegraphics[width=\textwidth]{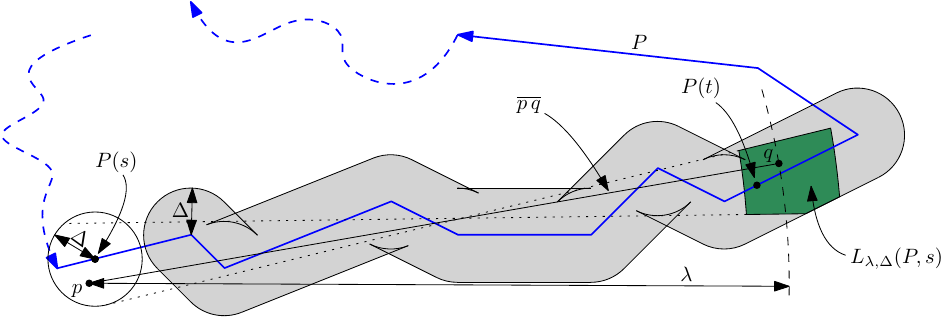}
    \caption{Illustration of the set $L_{\lambda,\Delta}(P,s)$ in dark green, together with the points $p$ and $P(t)$ realizing a point $q$ in $L_{\lambda,\Delta}(P,s)$, that is $\|p-q\|=\lambda$ and $\df(P[s,t],\overline{p\,q})\leq\Delta$.}
    \label{fig:LDendpoints}
\end{figure}

\begin{lemma}\label{lem:psld}
Let $P$ be a polygonal curve. Let $\lambda\geq 0$ and $\Delta\geq0$ be given. Then for every $s\in[0,1]$ there is a point $p^*\in\bR^d$, such that
\[L_{\lambda,\Delta}(P,s)\subset\ball_{5\Delta}(p^*).\]
\end{lemma}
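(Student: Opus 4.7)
The plan is to bound the diameter of $L := L_{\lambda,\Delta}(P,s)$ by $6\Delta$, and then convert this diameter bound into a radius bound strictly below $5\Delta$ via Jung's theorem. First I would dispose of the trivial case $L=\emptyset$ by taking $p^* = P(s)$. Otherwise, I would fix arbitrary $q_1, q_2 \in L$ with respective witnesses $(p_1, t_1)$ and $(p_2, t_2)$ from the definition of $L$, and assume without loss of generality that $t_1 \leq t_2$. Because Definition~\ref{def:frechet} uses non-decreasing, surjective reparametrizations, endpoint matching forces $\|p_i - P(s)\| \leq \Delta$ and $\|q_i - P(t_i)\| \leq \Delta$ for $i\in\{1,2\}$.

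The key geometric step is to locate $P(t_1)$ on the edge $\overline{p_2 q_2}$. Since $P(t_1)$ lies on $P[s, t_2]$, the $\Delta$-Fr\'echet matching between $P[s, t_2]$ and $\overline{p_2 q_2}$ produces a point $x = p_2 + \alpha v_2$ on the edge (where $v_2 = (q_2 - p_2)/\lambda$) with $\|P(t_1) - x\| \leq \Delta$, and monotonicity of the matching forces $\alpha \in [0,\lambda]$. A short chain of triangle inequalities then yields $\|q_1 - x\| \leq 2\Delta$; combining $\|q_1 - p_1\| = \lambda$ with $\|p_1 - p_2\| \leq 2\Delta$ gives $\|q_1 - p_2\| \geq \lambda - 2\Delta$, while $\|q_1 - p_2\| \leq \|q_1 - x\| + \alpha \leq 2\Delta + \alpha$. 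Hence $\alpha \geq \lambda - 4\Delta$, and finally $\|q_1 - q_2\| \leq \|q_1 - x\| + \|x - q_2\| \leq 2\Delta + (\lambda - \alpha) \leq 6\Delta$, bounding the diameter of $L$.

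To conclude, I would invoke Jung's theorem: any bounded subset of $\bR^d$ of diameter $D$ lies in a closed ball of radius at most $D\sqrt{d/(2(d+1))} \leq D/\sqrt{2}$. With $D \leq 6\Delta$, this yields a covering ball of radius at most $3\sqrt{2}\,\Delta < 5\Delta$, whose center serves as $p^*$. The hardest part will be establishing the monotonicity constraint $\alpha \in [0,\lambda]$ for the matched point $x$, which hinges on being careful with the parametrization conventions in the Fr\'echet distance; the remainder is routine triangle-inequality reasoning together with a standard invocation of Jung's theorem.
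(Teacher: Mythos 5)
Your proof is correct, but it takes a genuinely different route from the paper's. The paper exhibits an explicit center: it takes $t^*$ to be the smallest parameter $\geq s$ with $\|P(s)-P(t^*)\|\geq\lambda-2\Delta$, observes that any witnessing edge $\overline{p\,q}$ with $\df(P[s,t],\overline{p\,q})\leq\Delta$ must be a $\Delta$-stabber of $(P(s),P(t^*),P(t))$, and then directly bounds $\|P(t^*)-q\|\leq 5\Delta$ by splitting $\|p-q\|=\|p-m\|+\|m-q\|$ at the stabbed point $m$. You instead bound the \emph{diameter} of $L_{\lambda,\Delta}(P,s)$ by $6\Delta$ via a triangle-inequality chain through a matched point $x$ on the second edge, and then invoke Jung's theorem to enclose the set in a ball of radius at most $6\Delta/\sqrt{2}=3\sqrt{2}\,\Delta<5\Delta$. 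Both arguments are sound and both establish the stated bound; the paper's is more elementary (no appeal to Jung) and hands you a concrete center $P(t^*)$ lying on the curve itself, while yours trades that explicitness for a slightly sharper radius constant. One small remark: the monotonicity step you flag as the hardest part, namely $\alpha\in[0,\lambda]$, is in fact automatic --- the Fr\'echet reparametrization maps into $[0,1]$, which parametrizes the segment $\overline{p_2\,q_2}$ from end to end, so any matched partner of $P(t_1)$ lies on the segment by definition.
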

\begin{proof}
Assume $L_{\lambda,\Delta}(P,s)$ is nonempty, as otherwise we are done. Similarly assume $\lambda \geq 4\Delta$ as otherwise $L_{\lambda,\Delta}(P,s)$ is trivially contained in $\ball_{\lambda+\Delta}(P(s))\subset\ball_{5\Delta}(P(s))$. Now let $t^*\geq s$ be the smallest value such that for the corresponding points along $P$ it holds that $\|P(s)-P(t^*)\|\geq \lambda - 2\Delta$. Then we claim that the sought-after point is $p^*=P(t^*)$.\\
For the remainder of this proof refer to Figure~\ref{fig:psld}. Let $q\in L_{\lambda,\Delta}(P,s)$ be given. Then by definition there is a point $p$ and a value $t$, such that $\overline{p\,q}$ has length $\lambda$ and $\df(P[s,t],\overline{p\,q})\leq\Delta$. Thus $\|P(s)-p\|\leq\Delta$ and similarly $\|P(t)-q\|\leq\Delta$. But then by the triangle inequality $\|P(t)-P(s)\|\geq\lambda - 2\Delta$. Thus $t\geq t^*$. Hence, $\overline{p\,q}$ is an $\Delta$-stabber of $(P(s),P(t^*),P(t))$. This implies that there is a point $m$ along $\overline{p\,q}$ with $\|m-P(t^*)\|\leq\Delta$. As $m$ lies on $\overline{p\,q}$, we have that $\|p-q\|=\|p-m\|+\|m-q\|$. As $\|P(s)-P(t^*)\|\geq \lambda - 2\Delta$, we get that $\|p-m\|\geq \lambda-4\Delta$, and thus $\|m-q\|\leq 4\Delta$. And thus finally $\|P(t^*)-q\|\leq 5\Delta$, implying the claim.
\end{proof}

\begin{definition}
Let $P$ be a polygonal curve and $\lambda\geq 0$ and $\Delta\geq 0$. For $p\in\bR^d$ define the locus of edge endpoints of edges starting at $p$ which are close to subcurves of $P$ as the set
\[\cL_{\lambda,\Delta}(P,p)=\left\{q\in\bR^d\middle|\|p-q\|=\lambda \text{ and } \exists s,t \text{ with } 0\leq s\leq t\leq 1 \text{, }\df(P[s,t],\overline{p\,q})\leq\Delta\right\}.\]
\end{definition}

Similarly to Lemma~\ref{lem:psld} we can identify balls that cover the entirety of $\cL_{\lambda,\Delta}(P,p)$ for given $P,\lambda,\Delta$ and $p$. However, instead of a constant number of balls we need up to $k$ balls of constant radius to cover this set.

\begin{lemma}\label{lem:ppld}
Let $P\in\xx{d}{k}$ be a polygonal curve. Let $\lambda\geq 0$ and $\Delta\geq0$ be given. Then for every $p\in\bR^d$ there are $k$ points $p_1^*,\ldots,p_k^*\in\bR^d$ such that 
\[\cL_{\lambda,\Delta}(P,p)\subset\bigcup_{i=1}^k\ball_{5\Delta}(p_i^*).\]
\end{lemma}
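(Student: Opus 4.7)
The plan is to reduce to Lemma~\ref{lem:psld} by partitioning the admissible witness left-endpoints $s$ into at most $k$ clusters. Any $q \in \cL_{\lambda,\Delta}(P,p)$ comes with a witness $(s,t)$ for which $\|P(s) - p\| \leq \Delta$, so the relevant parameters lie in $A := \{s \in [0,1] : P(s) \in \ball_\Delta(p)\}$. Since $\ball_\Delta(p)$ is convex and each of the $k-1$ edges of $P$ is a line segment, each edge meets $\ball_\Delta(p)$ in at most one (possibly empty) segment. Consequently $A$ decomposes into at most $k-1 \leq k$ maximal closed intervals $I_1,\dots,I_r$.

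For each interval $I_j$ I would take its left endpoint $s_j$ and invoke Lemma~\ref{lem:psld} at the fixed starting parameter $s_j$, obtaining a point $p_j^* \in \bR^d$ with $L_{\lambda,\Delta}(P, s_j) \subset \ball_{5\Delta}(p_j^*)$. When $r<k$ the remaining centers can be set to any fixed point. The heart of the argument is then to show that every $q \in \cL_{\lambda,\Delta}(P,p)$ actually lies in $L_{\lambda,\Delta}(P, s_j)$ for the unique index $j$ with $s \in I_j$.

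The main obstacle lies in this last containment, because a priori the given witness realizes $\df(P[s,t],\overline{p\,q}) \leq \Delta$, not $\df(P[s_j,t],\overline{p\,q}) \leq \Delta$ on the strictly larger subcurve demanded by the definition of $L_{\lambda,\Delta}(P,s_j)$. To overcome this I would exploit the connectedness of $I_j$: since $[s_j, s] \subset I_j$, the prefix $P[s_j, s]$ lies entirely inside $\ball_\Delta(p)$. I can therefore extend the existing Fréchet matching by prepending an initial ``wait'' phase that maps the whole prefix $P[s_j, s]$ to the single chord endpoint $p$, and only afterwards runs the original matching on $[s,t]$ against $\overline{p\,q}$. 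The resulting piecewise-linear reparameterizations on $[0,1]$ remain non-decreasing and surjective, and the point-wise distance stays $\leq \Delta$ throughout. This construction is clean but slightly fiddly, and verifying it in detail is the only genuinely technical step; once this containment is established, the decomposition of $A$ immediately yields the required $k$ balls.
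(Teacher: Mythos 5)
Your proposal is correct and takes essentially the same approach as the paper: decompose the set $A=\{s\in[0,1]\mid P(s)\in\ball_\Delta(p)\}$ into maximal closed intervals, take their left endpoints, show $\cL_{\lambda,\Delta}(P,p)$ is contained in the union of the corresponding sets $L_{\lambda,\Delta}(P,s_j)$, and conclude via Lemma~\ref{lem:psld}. The paper's proof states the crucial containment step tersely (``the subcurve $P[l_i,s]$ is contained in $\ball_\Delta(p)$, and thus $\df(P[l_i,t],\overline{p\,q})\leq\Delta$''), whereas you spell out the explicit ``wait'' reparameterization; the two are the same argument.
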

\begin{proof}
The set $I=\{s\in[0,1]\mid P(s)\in\ball_{\Delta}(p)\}$ can be described as a disjoint union of at most $k$ closed intervals, as the complexity of $P$ is bounded by $k$. Assume that it is described by exactly $k$ such intervals, that is,  $I=\bigcup_{i=1}^k[l_i,r_i]$. 

\begin{figure}
    \centering
    \includegraphics[width=\textwidth]{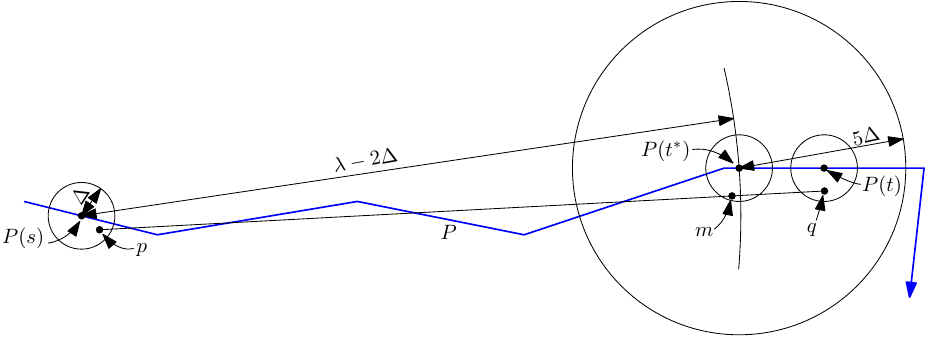}
    \caption{Illustration to the proof of Lemma~\ref{lem:psld}.}
    \label{fig:psld}
\end{figure}

It suffices to show that $\cL_{\lambda,\Delta}(P,p)\subset \bigcup_{i=1}^{k}L_{\lambda,\Delta}(P,l_i)$, as Lemma~\ref{lem:psld} then implies the claim. Assume that an arbitrary $q\in \cL_{\lambda,\Delta}(P,p)$ is given. Then by definition there are values $0 \leq s\leq t\leq 1$ such that $\df(P[s,t],\overline{p\,q})\leq\Delta$. This implies that $\|p-P(s)\|\leq\Delta$, and hence $s\in I$ and in turn $s\in[l_i,r_i]$ for some $1\leq i\leq k$. But then the subcurve $P[l_i,s]$ is contained in $\ball_\Delta(p)$, and thus $\df(P[l_i,t],\overline{p\,q})\leq\Delta$ implying that $q\in L_{\lambda,\Delta}(P,l_i)$ and thus the claim.
\end{proof}
\begin{corollary}\label{cor:plpdcover}
For every polygonal curve $P$ in $\bR^d$, $\Delta>0$, $\lambda>0$, $c>1$ and point $p\in\bR^d$ the set $N_{\Delta/c}\left(\cL_{\lambda,\left(1+c^{-1}\right)\Delta}(P,p)\right)$ can be covered by a set of balls of radius $\Delta/c$ centered at $O(k(10c+3)^d)$ points.
\end{corollary}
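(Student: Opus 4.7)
The idea is to first enlarge the locus $\cL$ using Lemma~\ref{lem:ppld}, then absorb the $\Delta/c$-neighborhood into the radii of the covering balls, and finally apply the Euclidean covering bound of Lemma~\ref{lem:circlecover} inside each enlarged ball.

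First I would invoke Lemma~\ref{lem:ppld} with slack $\Delta'=(1+c^{-1})\Delta$ in place of $\Delta$. This gives $k$ points $p_1^*,\ldots,p_k^*\in\bR^d$ with
\[\cL_{\lambda,(1+c^{-1})\Delta}(P,p)\;\subset\;\bigcup_{i=1}^k\ball_{5(1+c^{-1})\Delta}(p_i^*).\]
Since the $(\Delta/c)$-neighborhood of a union is the union of $(\Delta/c)$-neighborhoods, and the $(\Delta/c)$-neighborhood of a Euclidean ball of radius $r$ is a ball of radius $r+\Delta/c$ around the same center, this yields
\[N_{\Delta/c}\!\left(\cL_{\lambda,(1+c^{-1})\Delta}(P,p)\right)\;\subset\;\bigcup_{i=1}^k\ball_{R}(p_i^*),\qquad R=\left(5+\tfrac{6}{c}\right)\Delta.\]

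Next I would cover each of the $k$ balls $\ball_R(p_i^*)$ by balls of radius $\Delta/c$. To apply Lemma~\ref{lem:circlecover} I would set its parameter to $c'=R/(\Delta/c)=5c+6$, which gives a covering of $\ball_R(p_i^*)$ by $O\bigl((2c'+1)^d\bigr)=O\bigl((10c+13)^d\bigr)$ balls of radius $\Delta/c$. Summing over the $k$ enlarged balls gives $O\bigl(k(10c+13)^d\bigr)=O\bigl(k(10c+3)^d\bigr)$ balls in total, since the two expressions agree up to a multiplicative constant absorbed in the $O$-notation.

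There is no real obstacle here; every step is a direct application of a prior lemma, and the only care needed is in bookkeeping the two sources of inflation — the additive $\Delta/c$ from the neighborhood and the multiplicative factor $(1+c^{-1})$ used when invoking Lemma~\ref{lem:ppld} — which together produce the radius $R=(5+6/c)\Delta$ that feeds into Lemma~\ref{lem:circlecover}.
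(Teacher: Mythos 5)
Your argument follows exactly the same route as the paper's own (apply Lemma~\ref{lem:ppld} at level $(1+c^{-1})\Delta$, inflate by the $\Delta/c$-neighbourhood, then cover each resulting Euclidean ball via Lemma~\ref{lem:circlecover}), and your bookkeeping is in fact \emph{more} careful than the paper's one-line proof, which only records the identity $N_{\Delta/c}(\ball_{5\Delta}(p))=\ball_{(5+c^{-1})\Delta}(p)$ and hence implicitly uses covering balls of radius $5\Delta$ rather than the radius $5(1+c^{-1})\Delta$ that Lemma~\ref{lem:ppld} actually provides at level $(1+c^{-1})\Delta$. Carrying that through as you do gives $R=(5+6/c)\Delta$, hence $c'=R/(\Delta/c)=5c+6$ and a count of $O\bigl(k(10c+13)^d\bigr)$, whereas the paper's shortcut yields $5c+1$ and $O\bigl(k(10c+3)^d\bigr)$.

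The one genuine error is your final sentence: $O\bigl((10c+13)^d\bigr)$ is \emph{not} the same as $O\bigl((10c+3)^d\bigr)$ up to a constant absorbed by the $O$-notation. The ratio $\bigl((10c+13)/(10c+3)\bigr)^d$ grows like $2^{\Theta(d)}$ for small $c$, and $d$ is a parameter here, not a fixed constant. The correct base coming out of your (and the paper's intended) argument is $10c+13$, not $10c+3$; the paper's stated constant appears to be a small slip. This has no effect downstream, since the uses of this corollary (Lemma~\ref{lem:main}, Theorem~\ref{thm:main}, and the $F(d,k,\Lambda,\eps')=O\bigl(2^{O(d)}\cdots\bigr)^k$ bounds) all absorb a $2^{O(d)}$ factor in the base anyway, so replacing $43^d$ by $53^d$ there changes nothing in the stated results. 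But you should not assert that the two exponentials coincide under $O$-notation; state your bound as $O\bigl(k(10c+13)^d\bigr)$ and note, if you wish, that this is $2^{O(d)}\cdot k\cdot c^d$, which is all that is used later.
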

\begin{proof}
For any point $p\in\bR^d$, $\Delta>0$ and $c>1$ the sets $N_{\Delta/c}(D_{5\Delta}(p))$ and $D_{(5+c^{-1})\Delta}(p)$ coincide, so Lemma~\ref{lem:circlecover} and Lemma~\ref{lem:ppld} imply the claim.
\end{proof}

\begin{lemma}\label{lem:neighbourends}
Let $P$ be a polygonal curve. Let $\lambda\geq 0$, $\Delta\geq0$ and $c\geq 1$ be given. Then for every $p\in\bR^d$ and $p'\in\bR^d$ with $\|p-p'\|\leq\Delta/c$ we have that
\[\cL_{\lambda,\Delta}(P,p)\subset N_{\Delta/c}\left(\cL_{\lambda,\left(1+c^{-1}\right)\Delta}(P,p')\right).\]
\end{lemma}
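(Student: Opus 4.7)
The plan is to take an arbitrary $q \in \cL_{\lambda,\Delta}(P,p)$ and explicitly construct a nearby witness $q'$ lying in $\cL_{\lambda,(1+c^{-1})\Delta}(P,p')$. By definition of $q$, there are parameters $0 \leq s \leq t \leq 1$ with $\|p-q\|=\lambda$ and $\df(P[s,t],\overline{p\,q}) \leq \Delta$. The natural candidate is the translated copy
\[q' = p' + (q - p),\]
so that $q'-p' = q-p$ and in particular $\|p'-q'\| = \lambda$. This immediately gives $\|q-q'\| = \|p-p'\| \leq \Delta/c$, which is exactly the neighbourhood bound we need.

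Next I would bound $\df(\overline{p\,q},\overline{p'\,q'})$ via the linear parametrisations. For $u \in [0,1]$, the straight-line interpolations $(1-u)p + u q$ and $(1-u)p' + u q'$ differ by $(1-u)(p-p') + u(q-q') = p-p'$, of norm at most $\Delta/c$. Hence $\df(\overline{p\,q},\overline{p'\,q'}) \leq \Delta/c$. Applying the triangle inequality for $\df$ with the same parameter window $[s,t]$ then yields
\[\df(P[s,t],\overline{p'\,q'}) \;\leq\; \df(P[s,t],\overline{p\,q}) + \df(\overline{p\,q},\overline{p'\,q'}) \;\leq\; \Delta + \Delta/c \;=\; (1+c^{-1})\Delta,\]
so $q' \in \cL_{\lambda,(1+c^{-1})\Delta}(P,p')$ and consequently $q \in N_{\Delta/c}(\cL_{\lambda,(1+c^{-1})\Delta}(P,p'))$, as required.

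The only mild subtleties are the degenerate cases. If $\lambda = 0$, then $q = p$ and we can set $q' = p'$; the edges collapse to points and the computation above still goes through. If $\cL_{\lambda,\Delta}(P,p)$ is empty the statement is vacuous. I do not expect any real obstacle: the lemma is essentially a statement that the maps $p \mapsto \cL_{\lambda,\Delta}(P,p)$ and $\Delta \mapsto \cL_{\lambda,\Delta}(P,p)$ are jointly Lipschitz in the appropriate Hausdorff sense, and the translation argument above makes this precise in one short step.
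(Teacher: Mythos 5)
Your proof is correct and follows essentially the same route as the paper: translate $q$ to $q' = q + (p'-p)$, note $\|q-q'\|\leq\Delta/c$ and $\|p'-q'\|=\lambda$, bound $\df(\overline{p\,q},\overline{p'\,q'})\leq\Delta/c$, and conclude by the triangle inequality. Your explicit linear-parametrisation computation for the bound on $\df(\overline{p\,q},\overline{p'\,q'})$ is actually cleaner than the paper's appeal to Observation~\ref{obs:stabber}, which does not obviously yield that bound directly.
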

\begin{proof}
Let $q\in \cL_{\lambda,\Delta}(P,p)$. This implies that there are values $0\leq s\leq t\leq 1$, such that $\df(P[s,t],\overline{p\,q})\leq\Delta$ and $\|p-q\|=\lambda$. Let $q'=q+(p'-p)$. Then, as $\|p-p'\|\leq\Delta/c$ and thus $\|q-q'\|\leq\Delta/c$, we get that $\df(\overline{p\,q},\overline{p'\,q'})\leq\Delta/c$ by Observation~\ref{obs:stabber}, and thus $\df(P[s,t],\overline{p'\,q'})\leq(1+c^{-1})\Delta$. Finally, $\|p'-q'\|=\|p-q\|=\lambda$ and thus the point $q'$ lies in $\cL_{\lambda,\left(1+c^{-1}\right)\Delta}(P,p')$, implying the claim. 
\end{proof}

We now prove a stronger version of Theorem~\ref{thm:main}, which allows us to analyze the doubling constant of $(\mu,\eps)$-curves in $\xx{d}{k}$.

\begin{lemma}\label{lem:main}
Let $k,\mu,d\in\bN$ and $\eps>0$. Let a $(\mu,\eps)$-curve $P$ in $\xx{d}{k}$ be given, as well as $\Delta>0$ and $c\geq 1$. There is a family of curves $\mathcal{C}_P\subset\xl{d}{k}{\mu\eps+\Delta/c}$ of size $O(k\mu(10c+3)^d)^k$, such that for any $(\mu,\eps)$-curve $Q$ with $\df(P,Q)\leq\Delta$ there is a $Q^*\in \mathcal{C}_P$ with $\df(Q,Q^*)\leq \Delta/c$. 
\end{lemma}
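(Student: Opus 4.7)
The plan is to build $\mathcal{C}_P$ by an iterative vertex-by-vertex construction, maintaining the invariant that every prefix of $Q$ has some candidate prefix in $\mathcal{C}_P$ whose vertices are each within $\Delta/c$ of the corresponding vertices of $Q$. Combining this with Observation~\ref{obs:stabber} then yields $\df(Q,Q^*)\leq\Delta/c$.

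For the base case, note that $\df(P,Q)\leq\Delta$ forces $q_1\in \ball_\Delta(P(0))$. Applying Lemma~\ref{lem:circlecover} to $\ball_\Delta(P(0))$ yields a set $V_1$ of $O((2c+1)^d)$ candidate first-vertex locations $q_1^*$ with $\|q_1-q_1^*\|\leq\Delta/c$. For the inductive step, assume $q_{i-1}^*$ has been chosen so that $\|q_{i-1}-q_{i-1}^*\|\leq\Delta/c$. Because $Q$ is a $(\mu,\eps)$-curve, the edge $\overline{q_{i-1}q_i}$ has length $\lambda_i\in\{0,\eps,2\eps,\ldots,\mu\eps\}$, so there are only $\mu+1$ choices for $\lambda_i$. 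Moreover, since $\df(P,Q)\leq\Delta$, we have $q_i\in \cL_{\lambda_i,\Delta}(P,q_{i-1})$, and by Lemma~\ref{lem:neighbourends}
\[
q_i \in \cL_{\lambda_i,\Delta}(P,q_{i-1}) \subset N_{\Delta/c}\bigl(\cL_{\lambda_i,(1+c^{-1})\Delta}(P,q_{i-1}^*)\bigr).
\]
By Corollary~\ref{cor:plpdcover}, for each fixed $\lambda_i$ the right-hand side can be covered by at most $O(k(10c+3)^d)$ balls of radius $\Delta/c$; take the centers of these balls as the candidate set $V_i(q_{i-1}^*,\lambda_i)$. Ranging over the $O(\mu)$ choices of $\lambda_i$ and over the given $q_{i-1}^*$, we obtain at most $O(k\mu(10c+3)^d)$ candidates for $q_i^*$, and by construction at least one satisfies $\|q_i-q_i^*\|\leq \Delta/c$, preserving the invariant.

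Let $\mathcal{C}_P$ be the set of all polygonal curves obtained by concatenating choices from $V_1$ and the iterated sets $V_i$. The total count multiplies out to $O((2c+1)^d)\cdot O(k\mu(10c+3)^d)^{k-1}=O(k\mu(10c+3)^d)^k$, as required. For the edge length bound, note that each candidate edge $\overline{q_{i-1}^*q_i^*}$ has length at most $\lambda_i+\Delta/c\leq \mu\eps+\Delta/c$ since $q_i^*$ was produced by covering a set lying inside $N_{\Delta/c}(\cL_{\lambda_i,(1+c^{-1})\Delta}(P,q_{i-1}^*))$, which in turn sits within distance $\lambda_i+\Delta/c$ of $q_{i-1}^*$. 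Hence $\mathcal{C}_P\subset \xl{d}{k}{\mu\eps+\Delta/c}$. Finally, the chain of bounds $\|q_i-q_i^*\|\leq \Delta/c$ for all $i$ together with the fact that the identity parametrization realizes this pointwise distance on each edge (via Observation~\ref{obs:stabber}, or simply by the linearity of the edges) gives $\df(Q,Q^*)\leq \Delta/c$.

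The main obstacle is bookkeeping two subtle points in the inductive step: first, ensuring we apply Lemma~\ref{lem:neighbourends} with the correct relaxed parameter $(1+c^{-1})\Delta$ so that Corollary~\ref{cor:plpdcover} actually applies to the thickened locus; and second, verifying that even after blowing up the previous vertex by $\Delta/c$ and discretising the length $\lambda_i$, the resulting candidate edge lengths still lie in the claimed range $[0,\mu\eps+\Delta/c]$. Everything else, including the product-of-branchings count, is straightforward once the recursive covering is in place.
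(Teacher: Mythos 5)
Your proposal is correct and follows essentially the same approach as the paper: a vertex-by-vertex greedy construction of $\mathcal{C}_P$ using the covers from Lemma~\ref{lem:circlecover} for the first vertex and Corollary~\ref{cor:plpdcover} (via Lemma~\ref{lem:neighbourends}) for subsequent vertices, with the edge-length multiple $\lambda_i$ providing the extra $\mu$-fold branching. The only cosmetic difference is that the paper invokes Observation~\ref{obs:dfddf} (discrete Fr\'echet dominates continuous) to conclude $\df(Q,Q^*)\leq\Delta/c$ from the vertexwise bound, whereas you appeal to linearity of edges directly; Observation~\ref{obs:stabber} is not the right reference there, but the linearity argument you also give is fine.
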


\begin{proof}





We construct the set $\mathcal{C}_P$ as follows. First, choose an element $(m_1,\ldots,m_{k-1})\in\{1,\ldots,\mu\}^{k-1}$. Next, choose one circle center of a cover of $\ball_\Delta(P(0))$ consisting of $O((2c+1)^d)$ many balls of radius $r/c$, which exists by Lemma~\ref{lem:circlecover}. Iteratively choose one point among the circle centers of a cover of $N_{\Delta/c}\left(\cL_{m_{i-1}\eps,(1+c^{-1})\Delta}(P,q_{i-1}^*)\right)$ of Corollary~\ref{cor:plpdcover}, consisting of $O(k(10c+3)^d)$ many balls of radius $r/c$ as the vertex $q_i^*$ of $Q^*$ for $i\leq k$. Then $Q^*\in \xl{d}{k}{\mu\eps+\Delta/c}$, as for any $i$ the fact that $q_{i}^*$ lies in $N_{\Delta/c}\left(\cL_{m_{i-1}\eps,(1+c^{-1})\Delta}(P,q_{i-1}^*)\right)$ implies that there is a point $q\in \cL_{m_{i-1}\eps,(1+c^{-1})\Delta}(P,q_{i-1}^*)$, with $\|q_{i-1}^*-q\|=m_{i-1}\eps$ and $\|q-q_i^*\|\leq\Delta/c$. Hence, $\|q_i^*-q_{i-1}^*\|\leq m_1\eps+\Delta/c\leq\mu\eps+\Delta/c$. To account for all the choices, we have that $|\mathcal{C}_P|=O(k\mu(10c+3)^d)^k$.

Let $Q$ be a given $(\mu,\eps)$-curve, with $\df(P,Q)\leq\Delta$. The curve $Q$ consists of $k-1$ edges and induces an ordered set $(m_1,\ldots,m_{k-1})\in\{0,\ldots,\mu\}^{k-1}$ representing the lengths of the edges in order. Let $q_1,\ldots,q_k$ be the vertices of $Q$. For all $1\leq i\leq k$ it holds that $q_i\in \cL_{m_{i-1}\eps,\Delta}(P,q_{i-1})$, by construction.

As $\df(P,Q)\leq\Delta$, the first vertex $q_1$ lies in $\ball_\Delta(P(0))$, and thus there is a point $q_1^*$ of the cover of $\ball_\Delta(P(0))$ consisting of balls of radius $r/c$, that lies at distance at most $\Delta/c$ to $q_1$. For every subsequent $q_i$, by Lemma~\ref{lem:neighbourends} and because $q_i\in \cL_{m_{i-1}\eps,\Delta}(P,q_{i-1})$, $q_i\in N_{\Delta/c}\left(\cL_{m_{i-1}\eps,(1+c^{-1})\Delta}(P,q_{i-1}^*)\right)$ and thus there is a point $q_i^*$ of the $\Delta/c$-cover of $N_{\Delta/c}\left(\cL_{m_{i-1}\eps,(1+c^{-1})\Delta}(P,q_{i-1}^*)\right)$ that is at distance at most $\Delta/c$ to $q_i$. This implies that there is an element $Q^*$ (defined by exactly this choice of points) in $\mathcal{C}_P$ that has distance $\ddf(Q,Q^*)\leq\Delta/c$ and thus, by Observation~\ref{obs:dfddf}, it holds that $\df(Q,Q^*)\leq\Delta/c$.
\end{proof}

We are now ready to prove Theorem~\ref{thm:main} as a straightforward application of Lemma~\ref{lem:main}.

\begin{proof}[Proof of Theorem~\ref{thm:main}]

Let $P$ be a $(\mu,\eps)$-curve in $\xx{d}{k}$ and a value $\Delta$ be given. By Lemma~\ref{lem:main}, there is a family $\mathcal{C}_P$ of curves of size $O(k\mu(43)^d)^k$ in $\xl{d}{k}{\mu\eps+\Delta/4}\subset\xx{d}{k}$, such that for any $(\mu,\eps)$-curve $Q$ with $\df(P,Q)\leq\Delta$ there is curve $Q^*$ in $\mathcal{C}_P$ with $\df(Q,Q^*)\leq\Delta/4$. For any $Q^*\in\mathcal{C}_P$ identify some $(\mu,\eps)$-curve $\widehat{Q^*}$ such that $\df(Q^*,\widehat{Q^*})\leq \Delta/4$. If no such element exists, ignore $Q^*$. Otherwise for any $(\mu,\eps)$-curve $Q$ with $\df(P,Q)\leq\Delta$ there is a curve $Q^*$ in $\mathcal{C}_P$ with $\df(Q,Q^*)\leq\Delta/4$, and thus by the triangle inequality there is a $(\mu,\eps)$-curve $\widehat{Q^*}$ with $\df(Q,\widehat{Q^*})\leq\Delta/2$, proving the bounded doubling dimension.
\end{proof}

\begin{corollary}
Let $k,\mu,d\in\bN$ and $\eps>0$. The doubling dimension of the space of $(\mu,\eps)$-curves in $\xx{d}{k}$ under the discrete Fréchet distance is bounded by $O(k(d+\log(k\mu)))$.
\end{corollary}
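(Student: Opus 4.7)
The plan is to mirror the proof of Theorem~\ref{thm:main}, observing that the supporting machinery already controls the discrete Fréchet distance, so essentially no new geometric work is required. The two facts to exploit are Observation~\ref{obs:dfddf} (that $\df \leq \ddf$, so the discrete Fréchet ball of radius $\Delta$ around $P$ is contained in the continuous one) and the detail that the conclusion of Lemma~\ref{lem:main} actually establishes $\ddf(Q,Q^*) \leq \Delta/c$ and only then descends to $\df(Q,Q^*) \leq \Delta/c$ via Observation~\ref{obs:dfddf}. We will simply retain the stronger intermediate bound.

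Concretely, given a $(\mu,\eps)$-curve $P$ and $\Delta>0$, I would apply Lemma~\ref{lem:main} with $c = 4$ to obtain a family $\mathcal{C}_P \subset \xl{d}{k}{\mu\eps+\Delta/4}$ of cardinality $O(k\mu \cdot 43^d)^k$. For any $(\mu,\eps)$-curve $Q$ with $\ddf(P,Q) \leq \Delta$, Observation~\ref{obs:dfddf} gives $\df(P,Q) \leq \Delta$, so Lemma~\ref{lem:main} produces some $Q^* \in \mathcal{C}_P$ with $\ddf(Q,Q^*) \leq \Delta/4$.

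Next, as in the proof of Theorem~\ref{thm:main}, for each $Q^* \in \mathcal{C}_P$ I would pick (if it exists) a witness $(\mu,\eps)$-curve $\widehat{Q^*}$ with $\ddf(Q^*,\widehat{Q^*}) \leq \Delta/4$, discarding $Q^*$ otherwise; such a witness automatically exists whenever $Q^*$ is actually used as an approximator, since $Q$ itself is a valid choice. Applying the triangle inequality for $\ddf$ then yields $\ddf(Q,\widehat{Q^*}) \leq \Delta/2$, so the $\ddf$-ball of radius $\Delta$ around $P$ in the space of $(\mu,\eps)$-curves is covered by $O(k\mu \cdot 43^d)^k$ many $\ddf$-balls of radius $\Delta/2$. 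Taking base-$2$ logarithms gives the claimed doubling dimension bound of $O(k(d + \log(k\mu)))$.

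There is no real obstacle here: the key insight is that Lemma~\ref{lem:main} was proven with just enough slack that its conclusion is a discrete statement, and the only use of Observation~\ref{obs:dfddf} in the original argument is the direction that weakens this into a continuous-Fréchet statement. The mildest care needed is to ensure that every invocation of the triangle inequality in the Theorem~\ref{thm:main}-style argument is legal for $\ddf$, which it is since $\ddf$ is a pseudo-metric (and a metric after the standard quotient identification used throughout the paper).
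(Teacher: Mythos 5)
Your proof is correct and takes essentially the same route as the paper: both arguments hinge on the observation that $\df\leq\ddf$ (so the discrete $\Delta$-ball is contained in the continuous one) and on the fact that Lemma~\ref{lem:main} actually furnishes a $\ddf$-cover, after which the Theorem~\ref{thm:main} argument goes through verbatim for $\ddf$. Your version merely spells out the triangle-inequality and witness-selection steps that the paper leaves implicit.
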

\begin{proof}
This is a consequence of the proof of Lemma~\ref{lem:main} and Theorem~\ref{thm:main}. By Observation~\ref{obs:dfddf}, for any two curves $P$ and $Q$ in $\xx{d}{k}$ it holds that $\df(P,Q)\leq\ddf(P,Q)$, and thus any $\Delta$-ball centered at a curve $P$ under the discrete Fréchet distance is contained in the $\Delta$-ball under the continuous Fréchet distance. In the proof of Lemma~\ref{lem:main} the $\Delta$-ball is covered by $\Delta/2$-balls under the discrete Fréchet distance, thus a proof similar to that of Theorem~\ref{thm:main} implies the claim.
\end{proof}

\subsection{Improvements for $c$-packed curves}

In this section, we add the assumption that the curves in the space of $(\mu,\eps)$-curves are $c$-packed, which leads to an improvement of the above bounds on the doubling dimension. 

\begin{definition}[$c$-packed]
A curve $P$ is said to be $c$-packed for some $c>0$, if for any ball of radius $r$, the length of $P$ inside the ball is at most $cr$. The set of $c$-packed $(\mu,\eps)$-curves in $\bX^{d,k}$ is defined as the intersection of the set of $(\mu,\eps)$-curves with the set of $c$-packed curves.
\end{definition}

\begin{lemma}\label{lem:cppld}
Let $P\in\xx{d}{k}$ be a polygonal $c$-packed curve with complexity at most $k$. Let $\lambda\geq 0$ and $\Delta\geq0$ be given. Then for every $p\in\bR^d$ there are $2c = O(c)$ points $p_1^*,\ldots p_{2c}^*\in\bR^d$ such that 
\[\cL_{\lambda,\Delta}(P,p)\subset\bigcup_{i=1}^{2c}\ball_{5\Delta}(p_i^*).\]
\end{lemma}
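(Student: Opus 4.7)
The plan is to mirror the proof of Lemma~\ref{lem:ppld}, considering the components $[l_1,r_1],\ldots,[l_m,r_m]$ of $I=\{s\in[0,1]\mid P(s)\in\ball_\Delta(p)\}$, but to replace the reference parameter from Lemma~\ref{lem:psld} with one defined relative to $p$ rather than $P(l_i)$: for each $i$, set $t^*_i$ to be the smallest parameter $\geq l_i$ with $\|P(t^*_i)-p\|\geq \lambda-\Delta$, if such a parameter exists; otherwise $L_{\lambda,\Delta}(P,l_i)$ is empty and the component is irrelevant. This change of definition is what enables $c$-packedness to control the number of balls.

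First I would dispose of the degenerate case $\lambda \leq 4\Delta$: since every $q\in\cL_{\lambda,\Delta}(P,p)$ satisfies $\|q-p\|=\lambda\leq 4\Delta$, the single ball $\ball_{5\Delta}(p)$ already covers the whole set. From now on assume $\lambda>4\Delta$.

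Next I would verify the analogue of Lemma~\ref{lem:psld}, namely $L_{\lambda,\Delta}(P,l_i)\subset \ball_{5\Delta}(P(t^*_i))$. For $q\in L_{\lambda,\Delta}(P,l_i)$ realised by an edge $\overline{p'q}$ of length $\lambda$ and parameter $t$, the inequality $\|P(t)-p\|\geq \|q-p\|-\|q-P(t)\|\geq \lambda-\Delta$ forces $t\geq t^*_i$, so by Observation~\ref{obs:stabber} the edge $\overline{p'q}$ is a $\Delta$-stabber of $(P(l_i),P(t^*_i),P(t))$. Using $\|p'-p\|\leq 2\Delta$ (since $p'$ is within $\Delta$ of $P(l_i)\in\ball_\Delta(p)$) together with $\|P(t^*_i)-p\|\geq \lambda-\Delta$, the stabbing point associated with $P(t^*_i)$ must lie at arc length at least $\lambda-4\Delta$ from $p'$ along the edge, and the triangle inequality then yields $\|q-P(t^*_i)\|\leq 4\Delta+\Delta=5\Delta$.

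The crux of the proof, and where $c$-packedness is used, is to bound the number of distinct values among $\{t^*_i\}$ by $2c$. The key observation is that if $l_i<l_j$ give distinct $t^*_i\neq t^*_j$, then necessarily $l_j>t^*_i$: otherwise the first exit from $\ball_{\lambda-\Delta}(p)$ past $l_j$ would coincide with the one past $l_i$. Choosing one representative $l_i$ per distinct $t^*$-value therefore yields subcurves $P[l_i,t^*_i]$ that are pairwise parameter-disjoint, each entirely inside $\ball_{\lambda-\Delta}(p)$ by the minimality of $t^*_i$, and each of arc length at least $\|P(l_i)-P(t^*_i)\|\geq(\lambda-\Delta)-\Delta=\lambda-2\Delta$. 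The $c$-packedness assumption then forces these disjoint arc lengths to sum to at most $c(\lambda-\Delta)$, so the number of distinct $t^*$-values is at most $c(\lambda-\Delta)/(\lambda-2\Delta)\leq 2c$ whenever $\lambda\geq 3\Delta$. I expect the main conceptual hurdle to be identifying this alternative definition of $t^*_i$: with the original $P(l_i)$-anchored definition from Lemma~\ref{lem:psld}, different components can easily produce far-apart $P(t^*_i)$, so no grouping is possible and the bound would stay at $\Theta(m)$ rather than $O(c)$.
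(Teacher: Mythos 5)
Your proof is essentially correct, and it takes a genuinely different route from the paper's. The paper argues by contradiction: assuming $\cL_{\lambda,\Delta}(P,p)$ cannot be covered by $2c$ balls of radius $5\Delta$, it extracts a large well-separated subset, associates to each member a matched subcurve of $P$, argues via the isoceles-triangle geometry that these subcurves are parameter-disjoint, and then contradicts $c$-packedness. Your argument is constructive: you reuse the component decomposition of $I=\{s\mid P(s)\in\ball_\Delta(p)\}$ from Lemma~\ref{lem:ppld}, anchor the reference parameter $t^*_i$ at $p$ rather than at $P(l_i)$ as in Lemma~\ref{lem:psld}, and bound the number of \emph{distinct} $t^*_i$ directly by $c$-packedness, since the intervals $[l_i,t^*_i]$ from distinct representatives are parameter-disjoint and each contributes length $\geq\lambda-2\Delta$ inside $\ball_{\lambda-\Delta}(p)$. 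This is arguably cleaner, and --- unlike the paper's version --- it explicitly exhibits the cover as $\{\ball_{5\Delta}(P(t^*_i))\}$. Your identification of the anchor-change as the crucial modification is exactly right.

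One small inconsistency in the constants: you write ``$\|P(t)-p\|\geq\|q-p\|-\|q-P(t)\|\geq\lambda-\Delta$'', which requires $\|q-p\|\geq\lambda$; but in the next sentence you invoke $\|p'-p\|\leq 2\Delta$, which only matters if the witness edge starts at some $p'\neq p$, and then only $\|q-p\|\geq\lambda-2\Delta$ is available (giving $\|P(t)-p\|\geq\lambda-3\Delta$, not $\lambda-\Delta$). The two halves of the argument thus make incompatible assumptions about $q$. The clean fix is to carry out the chain only for $q\in\cL_{\lambda,\Delta}(P,p)$ (which is all you need), since then the witness edge is $\overline{p\,q}$ with $p'=p$ and $\|q-p\|=\lambda$; the calculation then tightens to $\|q-P(t^*_i)\|\leq 3\Delta$, comfortably within the stated $5\Delta$. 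Alternatively, if you insist on covering all of $L_{\lambda,\Delta}(P,l_i)$, you must define $t^*_i$ with threshold $\lambda-3\Delta$, which degrades the covering radius to $7\Delta$ and the packing constant slightly, still giving $O(c)$ but not matching the lemma's exact constant. With the first fix everything in your outline goes through.
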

\begin{proof}
Assume $\lambda \geq 5\Delta$ as otherwise $\cL_{\lambda,\Delta}(P,p)\subset \ball_{5\Delta}(p)$ clearly holds, implying the claim.

For the sake of contradiction, assume that $\cL_{\lambda,\Delta}(P,p)$ cannot be covered by $2c$ balls of radius $5\Delta$. This implies, that there are at least $2c+1$ points $\{p_1,\ldots\}=:\mathcal{P} $ in $\cL_{\lambda,\Delta}(P,p)$ with a pairwise distance of at least $10\Delta$. For any $p_i\in\mathcal{P}$ we know that $\|p_i-p\|=\lambda$ and there are values $s_i<t_i$ such that $\df(P[s_i,t_i],\overline{p \, p_i})\leq \Delta$. For any two distinct $p_i,p_j\in\mathcal{P}$  the points $p_i,p_j$ and $p$ form an isosceles triangle with side lengths $\lambda,\lambda$ and $\|p_i-p_j\|\geq 10\Delta$. This implies that the distance to $p_j$ from any point $q$ along $\overline{p\,p_i}$ is at least $2\Delta$. This means that $t_j$ cannot lie in the interval $[s_i,t_i]$ as $\|p_j-P(t_j)\|\leq\Delta$. This in turn implies that all intervals $[s_1,t_1],\ldots$ are pairwise disjoint. We have thus identified $2c+1$ disjoint (in the domain) subcurves of $P$ with a total length of at least $(2c+1)(\lambda-2\Delta)$, contained in the ball $\ball_\lambda(p)$. However, since $\lambda\geq 5\Delta$, we have that $(2c+1)(\lambda-2\Delta)> c\lambda$, contradicting the fact that $P$ is $c$-packed. This in turn implies the claim. 
\end{proof}
\begin{corollary}\label{cor:main}
Let $k,\mu,d\in\bN$ and $c,\eps>0$.
The doubling constant of the space of $c$-packed $(\mu,\eps)$-curves in $\xx{d}{k}$ is bounded by $O(43^dc\mu)^k$ and thus its doubling dimension is bounded by $O(k(d+\log(c\mu)))$.
\end{corollary}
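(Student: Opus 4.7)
The plan is to follow the same three-step structure used to derive Theorem~\ref{thm:main} from Lemma~\ref{lem:ppld}, but with every appeal to Lemma~\ref{lem:ppld} (which costs a factor of $k$) replaced by an appeal to Lemma~\ref{lem:cppld} (which costs only a factor of $2c$, where $c$ is the packing constant). Concretely, I would first establish a $c$-packed analog of Corollary~\ref{cor:plpdcover}: for any $c$-packed polygonal curve $P$, any $p\in\bR^d$, $\Delta,\lambda>0$ and $\alpha>1$, the set $N_{\Delta/\alpha}\bigl(\cL_{\lambda,(1+\alpha^{-1})\Delta}(P,p)\bigr)$ is covered by $O\bigl(c(10\alpha+3)^d\bigr)$ balls of radius $\Delta/\alpha$. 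This is immediate by combining Lemma~\ref{lem:cppld} (a cover by $2c$ balls of radius $5\Delta$) with Lemma~\ref{lem:circlecover} (refining each such ball into $O((2\alpha+1)^d)$ balls of the smaller radius, after first thickening by $\Delta/\alpha$).

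Next, I would prove a $c$-packed analog of Lemma~\ref{lem:main}. The construction of the candidate family $\mathcal{C}_P$ is identical to the one in Lemma~\ref{lem:main}: choose an edge-length vector $(m_1,\dots,m_{k-1})\in\{1,\dots,\mu\}^{k-1}$, pick $q_1^*$ from a cover of $\ball_\Delta(P(0))$ by balls of radius $\Delta/\alpha$, and then iteratively pick each $q_i^*$ from the cover of $N_{\Delta/\alpha}\bigl(\cL_{m_{i-1}\eps,(1+\alpha^{-1})\Delta}(P,q_{i-1}^*)\bigr)$ supplied by the packed version of Corollary~\ref{cor:plpdcover}. The only change relative to Lemma~\ref{lem:main} is that this inner cover now has size $O\bigl(c(10\alpha+3)^d\bigr)$ instead of $O\bigl(k(10\alpha+3)^d\bigr)$, so the total number of candidates is $O\bigl(c\mu(10\alpha+3)^d\bigr)^k$. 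The approximation argument (that any $c$-packed $(\mu,\eps)$-curve $Q$ with $\df(P,Q)\le\Delta$ has a representative $Q^*\in\mathcal{C}_P$ with $\ddf(Q,Q^*)\le\Delta/\alpha$, hence $\df(Q,Q^*)\le\Delta/\alpha$) carries over verbatim from Lemma~\ref{lem:main}, using Lemma~\ref{lem:neighbourends} for the inductive step.

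Finally, to obtain the stated bound I would imitate the proof of Theorem~\ref{thm:main}. Set $\alpha=4$, so that $\mathcal{C}_P$ has size $O(43^d c\mu)^k$ and every $c$-packed $(\mu,\eps)$-curve $Q$ in $\ball_\Delta(P)$ lies within $\df$-distance $\Delta/4$ of some $Q^*\in\mathcal{C}_P$. For each $Q^*\in\mathcal{C}_P$, if there exists some $c$-packed $(\mu,\eps)$-curve within $\df$-distance $\Delta/4$ of $Q^*$, designate one such curve as $\widehat{Q^*}$; otherwise discard $Q^*$. The triangle inequality yields $\df(Q,\widehat{Q^*})\le\Delta/2$, so the $\Delta$-ball around $P$ in the space of $c$-packed $(\mu,\eps)$-curves is covered by $O(43^d c\mu)^k$ balls of radius $\Delta/2$. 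Taking logarithms gives the doubling dimension bound $O(k(d+\log(c\mu)))$.

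The work is essentially bookkeeping once Lemma~\ref{lem:cppld} is in hand; the one place that deserves care is the last step, where one must remember that $\mathcal{C}_P\subset\xl{d}{k}{\mu\eps+\Delta/\alpha}$ and its elements need not themselves be $c$-packed $(\mu,\eps)$-curves, so that the replacement $Q^*\mapsto\widehat{Q^*}$ (or the decision to discard) is genuinely needed in order to land inside the subspace whose doubling constant we are bounding.
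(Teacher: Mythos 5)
Your proposal is correct and is precisely the "straightforward modification of Theorem~\ref{thm:main} via Lemma~\ref{lem:cppld}" that the paper's one-line proof alludes to: swap the $k$-ball cover of Lemma~\ref{lem:ppld} for the $2c$-ball cover of Lemma~\ref{lem:cppld} inside the analogs of Corollary~\ref{cor:plpdcover} and Lemma~\ref{lem:main}, set the cover ratio to $4$ to get $43^d$, and then conclude by the same snap-back-to-the-subspace argument as in Theorem~\ref{thm:main}. Your final remark about why the replacement $Q^*\mapsto\widehat{Q^*}$ (or discarding) is genuinely needed is a correct and worthwhile clarification, since $\mathcal{C}_P$ does not live inside the space of $c$-packed $(\mu,\eps)$-curves.
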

\begin{proof}
This follows from a straightforward modification of Theorem~\ref{thm:main} via Lemma~\ref{lem:cppld}.
\end{proof}

\section{Lower bounds for the doubling constant of $(\mu,\eps)$-curves}\label{sec:lowerbound}

In this section we want to show that the bound on the doubling dimension of $O(k(d+\log(k\mu)))$ is not too pessimistic. We begin with a straight-forward argument which implies a lower bound of $\Omega(d)$, before we discuss the lower bound of $\Omega(k\log \mu)$, resulting in a lower bound of $\Omega(d+k\log\mu)$.

The lower bound of $\Omega(d)$ follows trivially as the space of $(\mu,\eps)$-curves in $\bX^{d,1}$ consists of every singleton in $\bR^d$ and thus the doubling dimension of the $(\mu,\eps)$-curves in $\bX^{d,1}$ must be at least that of $\bR^d$. For spaces with curves of higher complexity any cover of a ball (with respect to $\df$) the bound follows similarly, by inspecting the starting points of the curves. 



\subsection{Lower bound of $\Omega(k\log \mu)$}\label{sec:lowerbound-mk}

In this section, we give a lower bound for the doubling dimension of both $c$-packed and non-$c$-packed $(\mu,\eps)$-curves in $\xx{d}{k}$ that shows the necessity of the factor $\mu^k$ in the bound of the doubling constant. The construction we give is an adaptation of a construction by Driemel et al. \cite{driemelClustering} that shows an unbounded doubling dimension for the space $(\xx{1}{3},\df)$.

\begin{lemma}\label{lem:lower}
Let $d=1$. Given $\mu>1$, $k\in\bN$ and $m\leq k/2$, we can construct a $(\mu,1)$-curve $C$, of complexity $k-2m$, and $\binom{(k-2m)(\mu)-1}{m}$ $(\mu,1)$-curves $G_i$, such that $d_F(C,G_i)\leq 1/2$ for all $i$. Additionally, for all $i\neq j$ there is no $(\mu,1)$-curve $X$ in $\xx{d}{k}$ such that $d_F(X,G_i)\leq 1/4$ and $d_F(X,G_j)\leq 1/4$. 
\end{lemma}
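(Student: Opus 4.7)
The plan is to extend the construction of Driemel et al.~\cite{driemelClustering} that witnesses unbounded doubling dimension for $(\bX^{1,3},\df)$. I would take $C$ to be a simple monotone $(\mu,1)$-curve of complexity $k-2m$ in $\bR$ whose image contains a collection of $(k-2m)\mu - 1$ integer-indexed ``slots'' at which local modifications can be placed. For each size-$m$ subset $S$ of these slots, define $G_S$ by inserting, at every $p\in S$, a pair of extra vertices that realize a short local detour: a backwards step of integer length (at least $1$) followed by a forwards step that returns the curve to its original trajectory. Each such insertion adds exactly two vertices and preserves the $(\mu,1)$-property, so $G_S$ has complexity $k-2m+2m=k$, and the number of distinct $G_S$ equals $\binom{(k-2m)\mu-1}{m}$.

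To verify $\df(C,G_S)\le 1/2$, I would exhibit a monotone path through the one-dimensional free-space diagram of $C$ and $G_S$: on the unmodified portions $C$ and $G_S$ are matched at unit speed with zero displacement, while at each detour the parametrization of $C$ is frozen at a value halfway between the endpoints of the detour, letting the two extra edges of $G_S$ be traversed inside a strip of width $1/2$. A direct computation, analogous to the analysis of a single detour of height $1$, confirms that such a strip exists precisely because the detour deviates from $C$ by $1$ and its midpoint is at distance $1/2$ from both the maximum and minimum values reached.

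The heart of the proof is the separation statement: for $S\ne S'$, no $(\mu,1)$-curve $X$ of complexity $k$ can satisfy $\df(X,G_S)\le 1/4$ and $\df(X,G_{S'})\le 1/4$ simultaneously. The rigidity claim driving the contradiction is that any $(\mu,1)$-curve within Fréchet distance $1/4$ of $G_S$ must contain, near each slot $p\in S$, its own pair of extra vertices realizing a local detour. The obstruction is geometric: the detour of $G_S$ at $p$ creates a bottleneck in the free-space diagram that can only be bypassed by a monotone path of width at most $1/4$ if the opposing curve also ``backs up'' through two new integer-length edges. Applying this claim to both $G_S$ and $G_{S'}$, and using that slots are spaced at least $1$ apart so no single detour in $X$ can serve two distinct slots, we conclude that $X$ must have detours at every slot in $S\cup S'$. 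Since $|S\cup S'|\ge m+1$, the number of extra vertices in $X$ exceeds $2m$, forcing the complexity of $X$ to be at least $k+2$, a contradiction.

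The main obstacle is formalizing the per-slot rigidity claim in the one-dimensional $(\mu,1)$ setting. This is essentially the free-space analysis already performed by Driemel et al.\ in their $k=3$ construction; the task here is to verify that the argument localizes across slots, so that a detour of $X$ intended to match one slot of $G_S$ cannot simultaneously discharge the obligation at a neighboring slot. The integer spacing of at least $1$ between slots, combined with the integer-length constraint on $X$'s edges, should make this decoupling clean. The careful book-keeping of this local argument, together with checking that exactly $\binom{(k-2m)\mu-1}{m}$ distinct detour patterns can be realized without collisions within the chosen base curve $C$, are the technical components that require verification.
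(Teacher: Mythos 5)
Your construction of $C$ as a monotone base curve with integer-indexed ``slots'' and of the $G_S$ by inserting unit back-and-forth detours at a chosen $m$-subset of slots matches the paper's construction, and the bound $\df(C,G_S)\le 1/2$ via a matching that freezes $C$ while each detour of $G_S$ is traversed also matches. Likewise, your rigidity observation --- that any $(\mu,1)$-curve $X$ within Fréchet distance $1/4$ of $G_S$ must, near each slot $n_i\in S$, visit a value $\ge n_i+3/4$ and subsequently a value $\le n_i+1/4$, and hence (since all edge lengths of $X$ are integer multiples of $1$) execute a ``left'' displacement of magnitude at least $1$ near $n_i$ --- is precisely the first half of the paper's separation argument.

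Where the proposal goes wrong is the conclusion you draw from this rigidity. You argue that having detours at all slots of $S\cup S'$ (at least $m+1$ of them) forces ``more than $2m$ extra vertices, hence complexity at least $k+2$.'' That inference presupposes a disjoint ``base'' of $k-2m$ vertices on top of which the detour vertices are added, but no such base is guaranteed: a vertex at the top of a detour also serves as a waypoint for the monotone traversal, so there is no double count to exploit. Concretely, take $m=3$, $\mu=4$, $k=10$ (so $L=(k-2m)\mu=16$) and slots $\{3,6,9,12\}$; the $(\mu,1)$-curve $X=(0,4,3,7,6,10,9,13,12,16)$ has $m+1=4$ detours and complexity exactly $k=10$, not $k+2$. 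More generally, once $m\le k/2-2$ the detour count alone produces no contradiction with $X$ having complexity $k$, so the counting step fails for most of the parameter range allowed by the lemma. The paper closes the argument differently and without any counting: fix a slot $n_i$ in the symmetric difference, say $n_i\in S\setminus S'$. By your rigidity claim applied to $G_S$, $X$ drops by an integer amount $\ge 1$ from a value $\ge n_i+3/4$ to a value $\le n_i+1/4$. But $G_{S'}$ has no detour at $n_i$, so it never crosses the level $n_i+1/2$ in the decreasing direction; a monotone matching between $X$ and $G_{S'}$ would have to pair $X$'s integer drop across $n_i+1/2$ with a non-increasing segment of $G_{S'}$ crossing $n_i+1/2$, which does not exist, and the integer constraint on $X$'s drop then forces $\df(X,G_{S'})\ge 1/2>1/4$, a contradiction. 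In short, keep your rigidity lemma but replace the vertex count with this direct positional comparison between $X$ and the second curve $G_{S'}$.
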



\begin{figure}
    \centering
    \includegraphics[width=\textwidth]{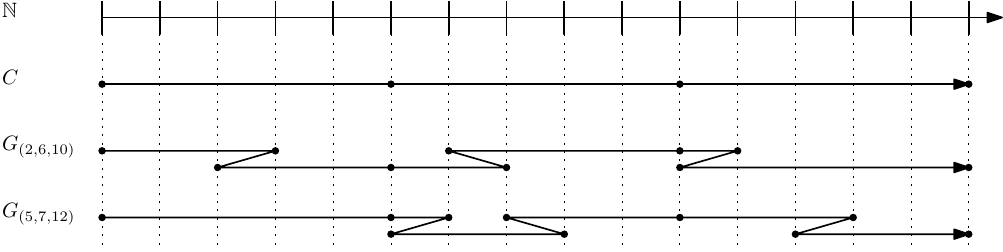}
    \caption{Illustration of the construction from Lemma~\ref{lem:lower} of two $(5,1)$-curves  $G_{(2,6,10)}$ and $G_{(5,7,12)}$ in $\xx{1}{9}$ that have Fréchet distance $1/2$ to the center curve $C$ at the top.}
    \label{fig:largedoubling}
\end{figure}

\begin{proof}

Define $C$ via the points $(c_1,\ldots,c_{k-2m+1})$, where $c_i = ((i-1)\cdot \mu)$ for $1\leq i\leq k-2m+1$. We now identify a large set of curves such that no two distinct elements of this set are at a distance at most $1/4$ to any $(\mu,1)$-curve in $\xx{1}{k}$. For this, choose an ordered subset $(n_1,\ldots,n_m)\subset\{0,\ldots,(k-2m)\mu-1\}$. 
Clearly, there are $\binom{(k-2m)\mu-1}{m}$ such choices. Based on the choice, we construct a curve $G_{(n_1,\ldots,n_m)}$ from $C$, first cutting $m+1$ pieces $C_0,\ldots,C_{m}$ from $C$, where $C_0$ goes from $0$ to $n_1+1$, $C_i$ from $n_i$ to $n_{i+1}+1$ for $1\leq i \leq m$, and $C_{m}$ from $n_m$ to $((k-2m)\mu)$.
For $1\leq i < m$, construct curves $T_i$ defined by two vertices $s_i=(n_i+1)$, $t_i=(n_i)$. 
Then, retrieve $G_{(n_1,\ldots,n_m)}$ via the concatenation $C_0* T_1 * C_1 * \ldots * T_m * C_m$. The set of curves constructed in this way forms the sought-after large set of curves. We show that $G_{(n_1,\ldots,n_m)}$ consists of $k$ edges. For this we show that every cut introduces $2$ new edges. If neither $n_i$ nor $n_i+1$ are vertices of $C$, then we are done, so assume that $n_i$ is a vertex of $C$. Then we introduce exactly two new edges: one from $(n_i)$ to $(n_i+1)$ and one from $(n_i+1)$ to $(n_i)$.
Similarly, if $n_i+1$ is a vertex of $C$, we also introduce two new edges, one from $(n_i+1)$ to $(n_i)$ and one from $(n_i)$ to $(n_i+1)$. 
Observe that $\df(C,G_{(n_1,\ldots,n_m)})=1/2$, as $C$ goes to the right, whereas $G_{(n_1,\ldots,n_m)}$ follows $C$ except in the introduced pieces $T_i$, where it goes left for a distance of $1$.

Let $X$ be a curve that has Fréchet distance at most $1/4$ to some curve $G=G_{(n_1,\ldots,n_m)}$ constructed above. Let $n_i\in (n_1,\ldots,n_m)$. Then the vertex $s_i=(n_i+1)$ and $t_i=(n_i)$ define the connecting piece $T_i$ between $C_{i-1}$ and $C_i$ of $G$. Now, $X$ has to first enter the interval $[n_i+3/4,n_i+5/4]$ and then $[n_i-1/4,n_1+1/4]$. As these two intervals are disjoint and $n_i < n_i+1$, the first interval lies to the right of the second interval. 
By construction, the vertex of $G$ before $s_i$ also lies to the left of $t_i$. Similarly, the vertex after $t_i$ lies to the right of $s_i$. Hence, $X$ has a vertex to the left of $t_i$, then a vertex in $[n_i+3/4,n_i+5/4]$, then a vertex inside $[n_i-1/4,n_i+1/4]$, and subsequently a vertex to the right of $s_i$. Thus, a curve $X$ also has to go to the left near $n_i+1$. 
Note that as every edge of $X$ has to have a length which is a multiple of $1$, any curve that does not go left near $n_i+1$ has to have a Fréchet distance of at least $1/2$ to $X$. 
This has to hold for all $n_i\in(n_1,\ldots,n_m)$. Since for any two distinct such constructed curves, there is a point where one travels to the left while the other does not, this then implies the claim as there are $\binom{(k-2m)(\mu)-1}{m}$ such constructed curves.
\end{proof}

\begin{corollary}\label{cor:lowermain}
For $d=1$ and given $\mu$ and $k$, the doubling dimension of the space of $(\mu,1)$-curves in $\xx{d}{k}$ is in $\Omega(k\log\mu)$.
\end{corollary}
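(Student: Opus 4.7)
The plan is to derive the corollary as a direct packing argument from Lemma~\ref{lem:lower}. The lemma produces, for any integer $m\leq k/2$, a center $(\mu,1)$-curve $C\in\xx{1}{k}$ together with a family of $N_m := \binom{(k-2m)\mu-1}{m}$ curves $\{G_i\}$, each satisfying $\df(C,G_i)\leq 1/2$, and with the crucial separation property that no $(\mu,1)$-curve $X\in\xx{1}{k}$ can simultaneously satisfy $\df(X,G_i)\leq 1/4$ and $\df(X,G_j)\leq 1/4$ for $i\neq j$. I would therefore focus on the metric ball $\ball_{1/2}(C)$ in the space of $(\mu,1)$-curves, which by construction contains all $N_m$ curves $G_i$.

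Next, I would translate the separation property into a lower bound on the doubling constant. Since the balls in the definition of the doubling constant are centered at points of the metric space, any covering ball $\ball_{1/4}(X)$ has its center $X$ being itself a $(\mu,1)$-curve; by the lemma it therefore contains at most one of the $G_i$. Consequently, any cover of $\ball_{1/2}(C)$ by balls of radius $1/4$ must use at least $N_m$ such balls, so the doubling constant $\nu$ of the space of $(\mu,1)$-curves in $\xx{1}{k}$ satisfies $\nu\geq N_m$.

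Finally, I would optimize $m$. Choosing $m=\lfloor k/4\rfloor$ (the result is trivial for constant $k$) gives $k-2m\geq k/2$, and the standard estimate $\binom{n}{m}\geq (n/m)^m$ yields
\[
N_m \;\geq\; \left(\frac{(k/2)\mu-1}{k/4}\right)^{k/4} \;=\; (2\mu - 4/k)^{k/4} \;\geq\; \mu^{k/4}
\]
for $\mu\geq 2$ and $k$ sufficiently large. Taking base-$2$ logarithms gives $\log_2\nu \in \Omega(k\log\mu)$, which is exactly the claimed bound on the doubling dimension. No real obstacle remains: the hard combinatorial work was done in Lemma~\ref{lem:lower}, and the only subtlety is the observation that the metric-space definition of doubling constant forces covering-ball centers to be $(\mu,1)$-curves, which is precisely the hypothesis the lemma provides.
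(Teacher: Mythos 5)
Your proposal is correct and takes essentially the same approach as the paper: invoke Lemma~\ref{lem:lower}, note that the separation property forces each $1/4$-ball (centered at a $(\mu,1)$-curve) to capture at most one $G_i$, and lower-bound the binomial coefficient $\binom{n}{m}\geq(n/m)^m$ after fixing $m$ to a constant fraction of $k$. The paper uses $m=k/3$ where you use $m=\lfloor k/4\rfloor$, and it leaves the packing-to-doubling-constant step implicit, which you usefully spell out; these are cosmetic differences only.
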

\begin{proof}
Apply $m=k/3$ to Lemma~\ref{lem:lower}. This, together with the fact that \[\binom{(k/3)\mu-1}{k/3}\geq \frac{((k/3)\mu-1)^{(k/3)}}{(k/3)^{(k/3)}} = \Omega\left( \frac{((k/3)\mu)^{(k/3)}}{(k/3)^{(k/3)}}\right)=\Omega\left(\mu^{(k/3)}\right)\]
 holds implies the claim.
\end{proof}

\begin{theorem}\label{thm:lower}
For $d=1$ and given $\mu$, $k$ and $\eps>0$, the doubling dimension of the space of $(\mu,\eps)$-curves in $\xx{d}{k}$ is in $\Omega(d+k\log\mu)$.
\end{theorem}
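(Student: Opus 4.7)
The plan is to derive Theorem~\ref{thm:lower} from Corollary~\ref{cor:lowermain} by a straightforward scaling argument, combined with a trivial $\Omega(d)$ lower bound via embedding of constant curves. Since the statement restricts to $d=1$, the $\Omega(d)$ term is absorbed into a constant, and the real content is upgrading Corollary~\ref{cor:lowermain} from $\eps = 1$ to arbitrary $\eps > 0$.

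First, I would rescale the construction of Lemma~\ref{lem:lower} by a factor of $\eps$: redefine $c_i := (i-1)\mu\eps$, $s_i := (n_i+1)\eps$, $t_i := n_i\eps$. All edge lengths are then integer multiples of $\eps$ bounded by $\mu\eps$, so $C$ and every $G_{(n_1,\dots,n_m)}$ is a $(\mu,\eps)$-curve. Fréchet distance is $1$-Lipschitz with respect to a uniform rescaling of $\bR$, so $\df(C,G_{(n_1,\dots,n_m)}) \leq \eps/2$, and the separation argument of Lemma~\ref{lem:lower} goes through verbatim: the proof only used that edge lengths were constrained to a discrete set with gap $1$, which after rescaling becomes a gap of $\eps$. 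Hence we obtain $\binom{(k-2m)\mu - 1}{m}$ many $(\mu,\eps)$-curves in $\ball_{\eps/2}(C)$ such that no $(\mu,\eps)$-curve in $\xx{d}{k}$ is within $\df$-distance $\eps/4$ of two distinct ones.

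Second, I would invoke the standard packing-to-doubling-dimension connection: if a metric ball of radius $r$ contains $N$ points that are pairwise uncoverable by a single ball of radius $r/2$, then the doubling constant is at least $N$, so the doubling dimension is at least $\log_2 N$. Choosing $m = k/3$ as in the proof of Corollary~\ref{cor:lowermain} gives $N = \Omega(\mu^{k/3})$ and hence doubling dimension $\Omega(k \log \mu)$.

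Third, for the $\Omega(d)$ summand (vacuous when $d=1$, but included for completeness since the statement reads $d + k\log\mu$), I would note that the map $x \mapsto$ (constant curve at $x$) is an isometric embedding of $\bR^d$ into the space of $(\mu,\eps)$-curves in $\xx{d}{k}$: a constant curve has all edges of length $0 = 0 \cdot \eps$ and hence qualifies, and the Fréchet distance between two constant curves equals the Euclidean distance between the defining points. Since doubling dimension is monotone under isometric embedding into a subspace, this inherits the $\Omega(d)$ lower bound from $\bR^d$. Summing the two independent contributions yields $\Omega(d + k\log\mu)$. The only step requiring any verification is the scale-invariance of the separation argument in Lemma~\ref{lem:lower}, which is purely geometric and presents no obstacle.
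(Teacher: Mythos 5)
Your proof is correct and follows essentially the same route as the paper: Corollary~\ref{cor:lowermain} (via Lemma~\ref{lem:lower}) combined with the trivial $\Omega(d)$ lower bound, with the scaling from $\eps=1$ to general $\eps>0$ spelled out explicitly, which the paper simply treats as a straightforward consequence. One minor caveat: doubling dimension is \emph{not} monotone under passage to a subspace in the naive sense (a subspace can have strictly larger doubling dimension than the ambient space, cf.\ Figure~\ref{fig:subspacedoublingexample}), so transferring $\Omega(d)$ from the isometrically embedded constant-curve copy of $\bR^d$ to the ambient space of $(\mu,\eps)$-curves actually requires Lemma~\ref{lem:subspacedoubling}, which gives the needed direction up to a factor of two; since $d=1$ here the term is a constant anyway, so the imprecision is harmless.
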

\begin{proof}
This is a straight forward consequence of Corollary \ref{cor:lowermain} and the trivial $\Omega(d)$ lower bound.
\end{proof}

\begin{corollary}
For $d=1$, $c\geq 6$ and given $\mu$, $k$ and $\eps>0$, the doubling dimension of the space of $c$-packed $(\mu,\eps)$-curves in $\xx{d}{k}$ is in $\Omega(d+k\log\mu)$.
\end{corollary}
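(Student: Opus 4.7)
The plan is to reuse the construction from Lemma~\ref{lem:lower} (and its scaling to edge unit $\eps$) to obtain the $\Omega(k\log\mu)$ part of the bound, after verifying that the curves produced there are already $6$-packed, so that for any $c\geq 6$ they lie in the space of $c$-packed $(\mu,\eps)$-curves. The trivial $\Omega(d)$ lower bound continues to hold in the $c$-packed setting since a single vertex (a $(\mu,\eps)$-curve of complexity $1$) is $c$-packed for every $c>0$, so the lower bound for $\bR^d$ transfers verbatim.

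The core step is the $c$-packedness check. The center curve $C$ from Lemma~\ref{lem:lower} is monotone and hence $1$-packed. For the modified curves $G=G_{(n_1,\dots,n_m)}$, I would estimate the length of $G$ inside an arbitrary interval $I=[a,b]\subset\bR$ of length $2r$ by integrating the multiplicity function $N(x)=|G^{-1}(x)|$ over $I$. Outside the reversal intervals $[n_i,n_i+1]$ the curve is monotone, so $N(x)\leq 1$; inside each reversal interval $[n_i,n_i+1]$ the curve makes exactly three passes (one on the way right in $C_{i-1}$, one on the backtrack $T_i$, one on the way right again in $C_i$), so $N(x)=3$. Because the $n_i$ are distinct integers, the reversal intervals are pairwise disjoint, so the total reversal-zone length inside $I$ is at most $\min(2r,m)\leq 2r$. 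Hence
\[
\operatorname{length}(G\cap I)=\int_I N(x)\,dx \;\leq\; 2r + 2\cdot 2r \;=\; 6r,
\]
so $G$ is $6$-packed, as required.

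To handle the general $\eps>0$, I would simply scale the entire $(\mu,1)$-construction by $\eps$: this turns $(\mu,1)$-curves into $(\mu,\eps)$-curves, preserves $c$-packedness (as both the curve length and the ball radius scale by the same factor), and preserves all pairwise Fréchet distances up to the factor $\eps$, so the separation argument of Lemma~\ref{lem:lower} carries over. Applying the exponent choice $m=k/3$ as in Corollary~\ref{cor:lowermain} then produces $\Omega(\mu^{k/3})$ curves inside a ball of radius $\eps/2$ centered at (the scaled copy of) $C$, no two of which admit a common $(\eps/4)$-cover by a single $c$-packed $(\mu,\eps)$-curve; this forces the doubling constant of the space of $c$-packed $(\mu,\eps)$-curves to be $\Omega(\mu^{\Omega(k)})$, i.e.\ doubling dimension $\Omega(k\log\mu)$.

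Combining the $\Omega(d)$ bound from singleton curves with the $\Omega(k\log\mu)$ bound from the scaled construction yields the claimed $\Omega(d+k\log\mu)$. The only step that requires genuine verification beyond invoking earlier results is the packedness estimate above; once the constant $6$ is established there, the rest of the argument is a direct transcription of Theorem~\ref{thm:lower} into the $c$-packed setting.
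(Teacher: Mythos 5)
Your proof takes the same route as the paper: verify that the curves from Lemma~\ref{lem:lower} are $6$-packed and then invoke Theorem~\ref{thm:lower}. The paper simply asserts the $6$-packedness as ``clear,'' whereas you supply the actual multiplicity argument (at most three passes over any point, since the reversal intervals $[n_i,n_i+1]$ have pairwise-disjoint interiors), which is a correct and worthwhile elaboration but not a different approach.
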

\begin{proof}
The constructed curves are clearly $6$-packed. Thus, Theorem \ref{thm:lower} implies the claim.
\end{proof}



\section{Approximate Nearest Neighbour}\label{sec:ANN}

Har-Peled et al.~\cite{harpeledFast} showed that $(1+\eps)$-ANN can be solved in metric spaces of bounded doubling dimension.

\begin{theorem}[\cite{harpeledFast}]\label{thm:harpeled}
Given a set $S$ of $n$ points in a metric space $\cM$ of bounded doubling dimension $\nu$, one can construct a data structure for answering $(1+\eps)$-approximate nearest neighbour queries. The query time is $2^{O(\nu)}\log n + \eps^{-O(\nu)}$, the expected preprocessing time is $2^{O(\nu)}n\log n$ and the space used is $2^{O(\nu)}n$.
\end{theorem}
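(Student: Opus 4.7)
The plan is to build a hierarchical net tree on $S$ and then augment it with cross pointers so that it supports fast approximate descent. For each scale $i \in \bZ$, maintain a $2^i$-net $N_i \subseteq S$, meaning every point of $S$ lies within $2^i$ of some $y \in N_i$ and the points of $N_i$ are pairwise at distance at least $2^i$. Require the nested property $N_{i+1} \subseteq N_i$, so that the sets give rise to a tree $\cT$ whose leaves are the points of $S$. The doubling property bounds the branching: by iterating the doubling condition $O(\log(r/r'))$ times, any ball of radius $r$ is covered by $2^{O(\nu) \log(r/r')}$ balls of radius $r'$, and by the packing property at most $2^{O(\nu)}$ $2^{i}$-net points lie in any ball of radius $O(2^{i+1})$. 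Hence each internal node of $\cT$ has $2^{O(\nu)}$ children.

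First I would construct $\cT$ in expected time $2^{O(\nu)} n \log n$ and space $2^{O(\nu)} n$ via a randomized incremental construction: process the points in a uniformly random order, and for each new point $p$ descend the partial net tree, at each level consulting only the $2^{O(\nu)}$ nearby net representatives to decide whether $p$ becomes a new net point at that level or gets attached to an existing one. Random order ensures that the expected amount of work spent rebalancing at each level telescopes to $O(n)$ per level, and the effective depth of the interesting portion of the tree for a single insertion is $O(\log n)$ in expectation. Along the way, at every level $i$ maintain for each net point $x \in N_i$ its \emph{navigation list} of all $y \in N_i$ with $\dd_\cM(x,y) \le c \cdot 2^i$ for a suitable constant $c$; by the doubling/packing bound each such list has $2^{O(\nu)}$ entries, giving the claimed total space.

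For a query point $q$, start at the root and at each level $i$ maintain a net point $x_i \in N_i$ that is within $O(2^i)$ of $q$; descend by scanning the $2^{O(\nu)}$ children of $x_i$ and the navigation list of $x_i$ to find such an $x_{i-1}$. Stop once the scale $2^i$ drops below $\eps \cdot \dd_\cM(q, x_i)$; at that point every true nearest neighbour of $q$ lies within a ball of radius $O((1+\eps) \dd_\cM(q, x_i))$ around $x_i$ that contains only $\eps^{-O(\nu)}$ net points of scale $2^i$, which can be enumerated through the navigation list in $\eps^{-O(\nu)}$ time and the best candidate returned. The descent costs $2^{O(\nu)} \log n$ (one navigation-list scan per level), and the final enumeration costs $\eps^{-O(\nu)}$, giving the stated query time.

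The main obstacle is achieving the near-linear preprocessing bound simultaneously with the $2^{O(\nu)}$ dependence: a naive construction of nested nets can cost $\Theta(n^2)$. Overcoming this requires a careful charging argument showing that the randomized incremental construction performs only $2^{O(\nu)}$ expected work per insertion per level, and that maintaining the navigation lists under insertions can be done locally using only the lists of nearby existing net points. The correctness of the query procedure rests on a standard triangle-inequality argument: any true $(1+\eps)$-approximate nearest neighbour must, at every level, be covered by some net point in the navigation list of the maintained $x_i$, so the descent never loses a good candidate before the termination scale is reached.
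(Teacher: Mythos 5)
The paper does not prove this theorem; it is imported verbatim from Har-Peled and Mendel~\cite{harpeledFast}, so there is no ``paper's own proof'' to compare against. Your sketch does capture the high-level architecture that underlies that work and its relatives (navigating nets, cover trees): nested $2^i$-nets forming a tree, $2^{O(\nu)}$ branching from the packing bound, navigation lists of nearby same-scale net points, a greedy descent for queries, and a final enumeration at scale $\approx \eps\cdot\dd_\cM(q,\mathrm{nn}(q))$.

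However, two of the bounds you claim do not actually follow from what you wrote. First, the query time: your descent visits one node per \emph{scale}, which gives a cost proportional to the number of distinct scales between the diameter of $S$ and the termination scale, i.e.\ $2^{O(\nu)}\log\Phi$ where $\Phi$ is the spread, not $2^{O(\nu)}\log n$. Making this spread-independent is one of the main technical contributions of the cited paper; it requires a \emph{compressed} net tree (collapsing long paths where nothing changes) together with an auxiliary point-location structure that jumps directly to the relevant compressed node, and you invoke neither. Second, the $2^{O(\nu)}n\log n$ expected preprocessing: you assert that a randomized incremental insertion of points into the nested nets ``telescopes to $O(n)$ per level'', but the charging argument that controls how often a newly inserted point promotes itself to higher levels and how much navigation-list maintenance each insertion triggers is exactly the hard part, and it is not supplied. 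Har-Peled and Mendel in fact do not run a straightforward RIC on the net tree: they first construct a crude hierarchical decomposition (a ring-separator/HST-type structure) in $O(n\log n)$ time and then convert it into a net tree, precisely to dodge the difficulty you name. As a sketch of the flavor of the data structure your proposal is reasonable; as a proof of the stated running-time bounds it leaves the two quantitatively decisive steps unproven.
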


A careful reading reveals an important specification for our purposes, namely, that the doubling dimension is that of the $n$-point metric space defined by $S$ induced by the metric space $\cM$ and not of the ambient metric space $\mathcal{M}$. Note that by Lemma \ref{lem:subspacedoubling} the doubling dimension of the metric space induced on the subset is at most twice the doubling dimension of the ambient space. For an example where the doubling dimension increases, refer to Figure \ref{fig:subspacedoublingexample}. 

\begin{figure}
    \centering
    \includegraphics[width=\textwidth]{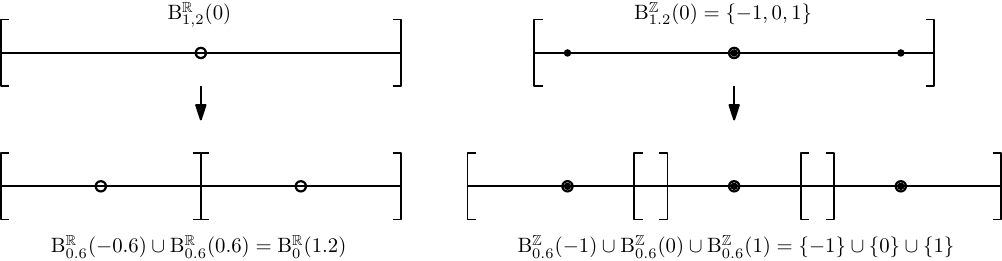}
    \caption{Example of a subset $\bZ$ of the metric space $\bR$ whose doubling dimension is larger than that of its ambient space. The disk centers are marked by circles.}
    \label{fig:subspacedoublingexample}
\end{figure}

\begin{lemma}\label{lem:subspacedoubling}
Let $(\mathcal{M},\dd_\cM)$ be a metric space, and let $S$ be some subset of $\cM$. Then the doubling dimension of $(S,\dd_\cM)$ is at most twice the doubling dimension of $(\cM,\dd_\cM)$.
\end{lemma}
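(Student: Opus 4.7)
The plan is a standard two-step covering argument. Let $\nu$ denote the doubling constant of $(\cM, \dd_\cM)$, so the doubling dimension is $\log_2\nu$, and the target is to show that the doubling constant of $(S, \dd_\cM)$ is at most $\nu^2$.

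First, fix a point $x \in S$ and a radius $r > 0$; we need to cover the subspace ball $\ball_r^S(x) = \ball_r^\cM(x) \cap S$ by balls of radius $r/2$ centered at points of $S$. Apply the doubling property of $\cM$ once to cover $\ball_r^\cM(x)$ by $\nu$ balls of radius $r/2$ in $\cM$, then apply it a second time to each of these, yielding a cover of $\ball_r^\cM(x)$ by at most $\nu^2$ ambient balls $\ball_{r/4}^\cM(c_1), \ldots, \ball_{r/4}^\cM(c_{\nu^2})$.

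Next, convert the ambient cover into a subspace cover. For each index $i$ such that $\ball_{r/4}^\cM(c_i) \cap S \neq \emptyset$, pick an arbitrary witness $s_i \in \ball_{r/4}^\cM(c_i) \cap S$; discard the indices where the intersection is empty. For any $y \in \ball_{r/4}^\cM(c_i) \cap S$ the triangle inequality gives $\dd_\cM(s_i, y) \leq \dd_\cM(s_i, c_i) + \dd_\cM(c_i, y) \leq r/4 + r/4 = r/2$, so $y \in \ball_{r/2}^S(s_i)$. Consequently $\ball_r^S(x) \subseteq \bigcup_i \ball_{r/2}^S(s_i)$ with at most $\nu^2$ balls centered in $S$, establishing that the doubling constant of $(S,\dd_\cM)$ is bounded by $\nu^2$ and hence its doubling dimension by $2\log_2\nu$.

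There is no real obstacle here; the only subtlety worth flagging is that one must center the subspace balls at points of $S$ rather than at the ambient centers $c_i$ (which may lie outside $S$), and this is exactly why a single halving step does not suffice and the radius must be shrunk to $r/4$ before switching centers, accounting for the factor of two in the bound on the dimension.
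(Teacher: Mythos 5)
Your proof is correct and follows essentially the same argument as the paper: cover the ambient ball $\ball_r^\cM(x)$ by $\nu^2$ balls of radius $r/4$ (applying the doubling property twice), then for each such ball that meets $S$ pick a witness in the intersection and expand to radius $r/2$ via the triangle inequality. The only cosmetic difference is that you use $\nu$ for the doubling constant whereas the paper uses $\nu$ for the doubling dimension; the underlying reasoning is identical.
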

\begin{proof}
Let $\nu$ be the doubling dimension of $\cM$.
Let $s\in S$ and $r>0$ be given and let $\ball^S_r(s)\subset S$ be the ball in $S$, of radius $r$ and centered at $s$, i.e., $\ball^S_r(s) = \ball^\cM_r(s)\cap S$. But $\ball_r^\cM(s)$ can be covered by $(2^\nu)^2$ balls of radius $r/4$. For any such ball check if the intersection with $S$ is nonempty. If this is the case, pick some element from this intersection, and center a ball in $S$ of radius $r/2$ around it. Clearly, any such larger ball contains the intersection of the smaller ball with $S$. Therefore, $\ball^S_r(s)$ is contained in the union of at most $(2^\nu)^2$ balls of radius $r/2$ in $S$, which implies the claim, as $\log_2((2^\nu)^2)=2\nu$.
\end{proof}

By Theorem~\ref{thm:main}, we know that the doubling dimension of the space of $(\mu,\eps)$-curves in $\xx{d}{k}$ is bounded. We further know that for any $\eps>0$ we can map any curve of $\xl{d}{k}{\Lambda}$ into the space of $(\lceil\Lambda/\eps\rceil+1,\eps)$-curves in  $\xx{d}{k}$ with a distortion of at most $\eps/2$, by Lemma~\ref{lem:simplification}. Hence, Theorem~\ref{thm:harpeled} together with Lemma~\ref{lem:subspacedoubling} imply Theorem~\ref{lem:approxresult}, a central piece to constructing a data structure solving the $(1+\eps)$-ANN problem for polygonal curves under the Fréchet distance.


\lemapproxresult*
\begin{proof}
Define $\hat{\eps}=\eps'/2$.
Let $\mu = \lceil\Lambda/\hat{\eps}\rceil+1=\Theta(1+\Lambda/\eps')$.
We begin by simplifying every polygonal curve $s\in S$ via Lemma~\ref{lem:simplification}, resulting in a set $S'$ of $({\mu},{\hat{\eps}})$-curves. This takes $O\left(\log(\mu) n k\right)$ time, which is in $O\left(2^\nu n\right)$. As $S'$ lies in the space of $({\mu},{\hat{\eps}})$-curves, the doubling dimension of the set $S'$ with the Fréchet distance is bounded by $\nu=O(k(d+\log(k(1+\Lambda/\eps'))))$ via Theorem~\ref{thm:main} and Lemma~\ref{lem:subspacedoubling}. Note that for every $s\in S$ and its simplification $s'\in S'$ it holds that $\df(s,s')\leq\hat{\eps}/2$. We apply Theorem~\ref{thm:harpeled} to the set $S'$ and $\eps$. Note that Theorem~\ref{thm:harpeled} assumes that the distance between any two points in the metric space of $(\mu,\eps)$-curves can be computed in $O(1)$ time. However, the computation of the continuous Fréchet distance takes polynomial time in $k$. On the other hand, both $2^{k\log k}$ and $\eps^{-k\log k}$ dominate $\mathrm{poly}(k)$ for $\eps<1$. Thus the running time is indeed as claimed.
We then query the data structure with $q$, returning an element $\widehat{s'}$ such that for every $s'\in S'$ it holds that $\df(q,\widehat{s'})\leq(1+\eps)\df(q,s')$. Lastly, the element of $S$ returned by the data structure will be the element $\widehat{s}\in S$ which corresponds to $\widehat{s'}$. We then get for every $s\in S$ that 
\begin{align*}
\df(q,\widehat{s})&\leq\df(q,\widehat{s'})+\hat{\eps}/2\leq(1+\eps)\df(q,s')+\hat{\eps}/2\leq(1+\eps)(\df(q,s)+\hat{\eps}/2)+\hat{\eps}/2 \\
&\leq (1+\eps)\df(q,s)+\hat{\eps}+\eps\hat{\eps}/2=(1+\eps)\df(q,s)+\hat{\eps}(1+\eps/2)\\
&\leq(1+\eps)\df(q,s)+\eps'.\qedhere
\end{align*}
\end{proof}

To get rid of the additive error, we want to set $\eps'$ to $O(\min_{s\neq s'\in S}\df(s,s'))$. As $\eps'$ impacts the running time, we reformulate the running time in terms of the spread.

Note that this measure is scale- and translation-independent, as translating all elements in $S$ by the same offset changes neither their pairwise Fréchet distances nor the length of the edges of the curves. Similarly scaling the elements of $S$ scales both the pairwise Fréchet distances as well as the length of the edges of the curves, thus not changing the \textgirth{}.

\begin{lemma}\label{lem:girthspread}
Given a set of curves $S\in\xx{d}{k}$, then 
\[ \mathcal{G}(S)^{-1}=O(\Phi(S))\]
where $\Phi(S)$ denotes the spread of the set of vertices and edges of curves in $S$.
\end{lemma}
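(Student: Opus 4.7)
Write $L=\max_{e\in E(S)}\|e\|$ and $D=\min_{s\neq s'\in S}\df(s,s')$, so that $\mathcal{G}(S)^{-1}=L/D$. The plan is to prove $L/D\leq\Phi(S)$ by separately bounding the numerator and denominator of $\Phi(S)$, namely the maximum and minimum non-zero pairwise distances in the collection of vertices and edges of curves in $S$. The numerator bound is immediate: the two endpoints of a longest edge are themselves vertices in the collection, and their pairwise distance as singleton sets equals $L$, so the maximum pairwise distance is at least $L$. It remains to show that the minimum non-zero pairwise distance in the collection is at most $D$ (possibly up to a universal constant absorbed into the $O(\cdot)$).

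For the minimum bound, fix $s\neq s'$ achieving $\df(s,s')=D$ and consider an optimal Fréchet reparametrization $f,g$. Since $f(0)=g(0)=0$, the first vertices satisfy $\|s(0)-s'(0)\|\leq D$, yielding a pair of distinct vertex-singletons at positive distance $\leq D$ if $s(0)\neq s'(0)$; the last vertices give the analogous case at $t=1$. Assuming both endpoint pairs coincide, I examine edge lengths: if any edge of $s$ or $s'$ has length $\leq D$, its two (distinct) endpoints furnish the required pair. Otherwise all edges have length strictly exceeding $D$, and for any intermediate vertex $v$ of $s$ I consider the matched point $p$ on $s'$ with $\|v-p\|\leq D$. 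If $p$ is a vertex of $s'$ different from $v$, this gives a vertex-vertex pair at positive distance $\leq D$; if $p$ is interior to some edge $e'$ of $s'$ with $v\notin e'$, the pair $(\{v\},e')$ has positive distance $\leq D$. The residual case is when no such pair is produced, which forces every intermediate vertex of $s$ to lie on the image of $s'$ and, symmetrically, every vertex of $s'$ to lie on the image of $s$.

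The main obstacle will be ruling out this residual case. The strategy is to show that mutual vertex containment $V(s)\subset\mathrm{image}(s')$ and $V(s')\subset\mathrm{image}(s)$, combined with $s(0)=s'(0)$, $s(1)=s'(1)$ and every edge having length strictly larger than $D$, forces $\df(s,s')=0$ and thus contradicts the standing assumption that distinct curves in $S$ have positive Fréchet distance. I would carry this out by induction on vertex index: using the monotonicity of $f,g$ together with the fact that in the residual case any two distinct vertex-points in the combined collection are pairwise separated by more than $D$, the $i$-th vertices of $s$ and $s'$ must coincide at each step, whence $s=s'$. The constant-factor slack permitted by the target bound $\mathcal{G}(S)^{-1}=O(\Phi(S))$ provides useful flexibility: any intermediate conclusion of the form $\min\leq cD$ for a universal constant $c$ already gives the lemma, so minor boundary effects in the contradiction argument can be absorbed into the hidden constant.
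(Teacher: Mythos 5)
Your decomposition matches the paper's: the numerator is handled trivially (the two endpoints of a longest edge are vertices, so the maximum pairwise distance is at least $L$), and the real content is the lower bound $\min_{s\neq s'\in S}\df(s,s')=\Omega\bigl(\min\text{ nonzero pairwise distance in }V(S)\cup E(S)\bigr)$. The paper simply asserts this as a known fact (``the Fréchet distance is approximated up to a constant by a vertex--vertex or vertex--edge distance''); you try to prove it directly, and the gap is in the residual case.

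The residual case does \emph{not} force $\df(s,s')=0$, and the induction that the $i$-th vertices of $s$ and $s'$ coincide does not go through. In $d=1$ take $s$ with vertex sequence $(0,10,5,20)$ and $s'$ with vertex sequence $(0,5,10,20)$. Then $D:=\df(s,s')=5/2>0$, the endpoints agree, every edge has length at least $5>D$, the four distinct vertex points $\{0,5,10,20\}$ are mutually at distance at least $5>D$, and both images equal $[0,20]$, so every vertex of either curve lies on the other's image. Under an optimal matching in which $s'$ pauses at $7.5$ while $s$ traverses $10\to 5$, each intermediate vertex of $s$ is matched to the interior point $7.5$ of an edge of $s'$ that contains it (and symmetrically for $s'$), so none of your cases produce a pair; yet $s\neq s'$ and the second vertices ($10$ vs.\ $5$) are different, so the inductive step fails at $i=2$. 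The lemma still holds in this instance, but only via the pair $(\{0\},\{5\})$ at distance $5=2D$, i.e.\ with a genuine constant factor larger than one. Thus the residual case cannot be dispatched by a $\df=0$ contradiction, nor by exhibiting a pair at distance $\leq D$ exactly: one must produce a pair at distance $O(D)$, which is the step the paper hides in ``up to a constant.'' Making that precise requires something like the Alt--Godau critical-event analysis (in particular the monotonicity/new-passage events, which involve two vertices of one curve and an edge of the other), not the mutual-containment-plus-induction route you sketch.
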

\begin{proof}
Observe that no two curves in $S$ have a Fréchet distance of $0$. Note that the Fréchet distance between two curves $P$ and $Q$ is approximated up to a constant by the Euclidean distance of either one vertex of $P$ and one vertex of $Q$ or the Euclidean distance of a vertex and an edge (one of $P$ and one of $Q$) as sets. Thus \[\min_{s\neq s'\in S}\df(s,s')= \Omega\left(\min_{\substack{o,o'\in V(S)\cup E(S)\\\mathrm{d}(o,o')> 0}}\mathrm{d}(o,o')\right)\] where $V(S)$ denotes the set of vertices and $E(S)$ denotes the set of edges defining the curves in $S$. Further note that as the length of any edge of a curve $s$ in $S$ is defined as the distance of two vertices defining $s$. As such, observe that
\[\max_{s\in S, e\in E(s)}\|e\|\leq\max_{\substack{p,q\in V(S)}}\mathrm{d}(p,q)\leq \max_{\substack{o,o'\in V(S)\cup E(S)}}\mathrm{d}(o,o').\]
Thus the claim follows.
\end{proof}

\thmgirthmain*
\begin{proof}
Let $\eps'=\eps/4$ and $\eps''=\eps'\left(\min_{s\neq s'\in S}\df(s,s')\right)$. Let $E(S)$ be the set of edges of curves in $S$ and let further $\Lambda=\max_{e\in E(S)}\|e\|$, thus clearly $S\subset\xl{d}{k}{\Lambda}$. We then apply Theorem~\ref{lem:approxresult} with $\eps'$ and $\eps''$ resulting in the described data structure. Let $q\in\xx{d}{k}$ be given. Let $s^*$ be the element in $S$ minimizing the distance to $q$. Querying the data structure with $q$ results in an element $\widehat{s}$ with the the property that
\[\df(q,\widehat{s})\leq(1+\eps')\df(q,s^*) + \eps'\left(\min_{s\neq s'\in S}\df(s,s')\right).\]
We now look at two different cases. First assume $(2+\eps')\df(q,s^*) < (1-\eps')\min_{s\neq s'\in S}\df(s,s')$. But then for every $s'\neq s^*$ we know that
\begin{align*}
    \df(q,s')&\geq \df(s^*,s') - \df(q,s^*)\geq \min_{s\neq s'\in S}\df(s,s') - \df(q,s^*)\\
    &>(1-\eps')\left(\min_{s\neq s'\in S}\df(s,s')\right) + \eps'\left(\min_{s\neq s'\in S}\df(s,s')\right)- \df(q,s^*)\\
    &= (1+\eps')\df(q,s^*) + \eps'\left(\min_{s\neq s'\in S}\df(s,s')\right)
\end{align*}
and thus $\widehat{s} = s^*$, implying $\df(q,\widehat{s})\leq (1+\eps)\df(q,s^*)$.

Now assume $(2+\eps')\df(q,s^*) \geq (1-\eps')\min_{s\neq s'\in S}\df(s,s')$. Then we know that 
\begin{align*}
    \df(q,\widehat{s}) &\leq (1+\eps')\df(q,s^*) + \eps'\left(\min_{s\neq s'\in S}\df(s,s')\right)\\
    &\leq (1+\eps')\df(q,s^*) + \eps'\left(\frac{2+\eps'}{1-\eps'}\right)\df(q,s^*).
\end{align*}
Now since $\eps\leq 1$, we know that $\eps'\leq 1/4$ and thus $\frac{2+\eps'}{1-\eps'}\leq 3$. This then concludes the case-distinction, as
\[\df(q,\widehat{s})\leq (1+\eps')\df(q,s^*) + \eps'\left(\frac{2+\eps'}{1-\eps'}\right)\df(q,s^*) \leq (1+4\eps')\df(q,s^*) = (1+\eps)\df(q,s^*).\]

Regarding the running time, observe that 
$\Lambda/\min_{s\neq s'\in S}\df(s,s')=\mathcal{G}(S)^{-1}$. Hence, as $\eps' = \Theta(\eps)$ and $\eps''=\Theta(\eps\left(\min_{s\neq s'\in S}\df(s,s')\right))$, the preprocessing time, query time and space is as claimed.
\end{proof}

\corgirthmain*
\begin{proof}
This is a direct consequence of Theorem~\ref{thm:girth_main} and Lemma~\ref{lem:girthspread} together with the fact that the spread of any collection of sets is at least $1$, implying $(1+\Phi(S)\eps^{-1})=O(\Phi(S)\eps^{-1})$, since both $\Phi(S)\geq 1$ and $\eps \leq 1$.
\end{proof}

\section{Generalization to arbitrary metric spaces}\label{sec:generalization}

In this section, we observe that our techniques directly generalize to other metric spaces. 
In particular, 
we can solve $(1+\eps)$-ANN in any space with unbounded doubling dimension as long as a subspace 
with bounded doubling dimension that is 
close (under the Gromov-Hausdorff distance) exists. This approach is closely related to that of Sheehy and Seth \cite{sheehyNearlydsopd} in that they extended Clarkson's algorithm for finding a $\lambda$-net in some metric space $X$ if there is another space of bounded doubling dimension with Gromov-Hausdorff distance at most $\lambda/3$ to $X$.



\begin{definition}[tractably nearly-doubling space]
Let $\mathcal{M}$ be a metric space. We say $\mathcal{M}$ is tractably nearly-doubling if there are functions $\nu_\mathcal{M}:\bR\rightarrow\bR$ and $\rho_\cM:\cM\times\bR\rightarrow\bR$ such that for every $\eps>0$ there is some subspace $\mathcal{M}_\eps\subset\mathcal{M}$ with doubling dimension at most $\nu_\mathcal{M}(\eps)$ and projection $\pi_\eps:\mathcal{M}\rightarrow\mathcal{M}_\eps$. Furthermore we require $\dd_\cM(s,\pi_\eps(s))\leq\eps$ for any $s\in\cM$ and the element $\pi_\eps(s)\in\cM_\eps$ can be computed in $\rho_\cM(s,\eps)$ time.
\end{definition}

Observe that for a tractably nearly-doubling space $\mathcal{M}$ the Gromov-Hausdorff distance $\mathrm{d}_{GH}(\mathcal{M},\mathcal{M}_\eps)$ is at most $\eps$. Note that by Lemma~\ref{lem:simplification} and Theorem~\ref{thm:main}, the space $\xl{d}{k}{\Lambda}$ is tractably nearly-doubling for any $d,k\in\bN$ and $\Lambda>0$. The proof of the following lemma and theorem conceptually are the same proof as for Theorem~\ref{lem:approxresult} and Theorem \ref{thm:girth_main}.

\begin{lemma}\label{lem:generalized}
Let $\mathcal{M}$ be a metric space, and let $S\subset \mathcal{M}$ be a set of $n$ points and $\eps>0$. Let $\mathcal{M}'\subset \mathcal{M}$ be some subspace of doubling dimension $\nu$, and let $S'\subset\mathcal{M}'$ be another set with a surjection $\pi:S\rightarrow S'$ with a running time of $O(1)$. 
Then one can compute a data structure which for given $q\in\mathcal{M}$ outputs $\hat{s}\in S$, such that for every $s\in S$
\[\mathrm{d}_\mathcal{M}(q,\hat{s})\leq(1+\eps)\mathrm{d}_\mathcal{M}+(2+\eps)\left(\max_{s\in S}\dd_\cM(s,\pi_\eps(s))\right).\]
The expected preprocessing time is $2^{O(\nu)}+\eps^{-O(\nu)}$ and the space used is $2^{O(d)}n\log n$, with the query time being $2^{O(\nu)}\log n + \eps^{-O(\nu)}$, where $\nu$ is the doubling dimension of $\mathcal{M}'$.
\end{lemma}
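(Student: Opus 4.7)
The plan is to reduce the ANN problem on $S$ in $\mathcal{M}$ to an ANN problem on $S' = \pi(S)$ inside the low-doubling subspace $\mathcal{M}'$ and then invoke Theorem~\ref{thm:harpeled} directly. This mirrors the strategy used in Theorem~\ref{lem:approxresult}, with the generic projection $\pi$ playing the role of the edge-length snapping of Lemma~\ref{lem:simplification}.

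First I would preprocess by computing $\pi(s)$ for every $s \in S$---which costs $O(n)$ time since each call to $\pi$ is $O(1)$---and storing for each element of $S'$ a pointer back to some preimage in $S$. Since $S' \subset \mathcal{M}'$ and $\mathcal{M}'$ has doubling dimension $\nu$, Lemma~\ref{lem:subspacedoubling} bounds the doubling dimension of $S'$ (under the inherited metric) by $2\nu$. I would then apply Theorem~\ref{thm:harpeled} to $S'$ with approximation parameter $\eps$, yielding a data structure with the stated preprocessing, space, and query bounds.

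On a query $q \in \mathcal{M}$, I would pass $q$ directly to the Har--Peled--Mendel structure to obtain $\widehat{s'} \in S'$ satisfying $\dd_\cM(q,\widehat{s'}) \leq (1+\eps)\,\dd_\cM(q, s')$ for every $s' \in S'$, and return the stored preimage $\hat{s} \in S$ of $\widehat{s'}$. For correctness, set $\delta = \max_{s \in S} \dd_\cM(s,\pi(s))$ and apply the triangle inequality twice: once to bound $\dd_\cM(q,\hat{s})$ in terms of $\dd_\cM(q,\widehat{s'})$ using $\dd_\cM(\widehat{s'},\hat{s}) \leq \delta$, and once to bound $\dd_\cM(q,\pi(s))$ in terms of $\dd_\cM(q,s)$ using $\dd_\cM(s,\pi(s)) \leq \delta$. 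Combined with the data-structure guarantee $\dd_\cM(q,\widehat{s'}) \leq (1+\eps)\dd_\cM(q,\pi(s))$, this yields the target bound $(1+\eps)\dd_\cM(q,s) + (2+\eps)\delta$ in a single line of algebra.

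The one delicate point---and the only real obstacle---is that the query $q$ lives in $\mathcal{M}$ rather than in $\mathcal{M}'$, so we are feeding a ``foreign'' query into a data structure built for points in $\mathcal{M}'$. This works because the analysis underlying Theorem~\ref{thm:harpeled} only invokes the doubling property of the point set $S'$, not of the query, so an arbitrary $q \in \mathcal{M}$ can be used provided the distances $\dd_\cM(q,\cdot)$ are well-defined---which they are since $\mathcal{M}' \subset \mathcal{M}$. This is already implicit in the proof of Theorem~\ref{lem:approxresult}; everything else is a direct port of that argument.
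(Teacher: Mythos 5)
Your proposal is correct and follows essentially the same route as the paper: project $S$ to $S'=\pi(S)$, build the Har--Peled--Mendel structure on $S'$, query with $q$ directly, return the stored preimage, and chain the triangle inequality twice to get the $(1+\eps)\dd_\cM(q,s)+(2+\eps)\delta$ bound. Your two explicit asides --- invoking Lemma~\ref{lem:subspacedoubling} to pass from the doubling dimension of $\mathcal{M}'$ to that of the finite set $S'$, and flagging that the query $q$ lives in $\mathcal{M}\setminus\mathcal{M}'$ yet can still be fed to the structure --- are both left implicit in the paper's proof but are exactly the right points to worry about.
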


\begin{proof}
This proof is parallel to the proof of Theorem~\ref{lem:approxresult}. We first store the inverse map $\pi^{-1}:S'\rightarrow S$ in $O(n)$ time. We then compute the data structure from Theorem~\ref{thm:harpeled} on the set $S'$. This takes $2^{O(\nu)}+\eps^{-O(\nu)}$ expected preprocessing time and $2^{O(\nu)}n\log n$ space. Further the query time for this data structure is $2^{O(\nu)}\log n + \eps^{-O(\nu)}$, as $\nu$ is the doubling dimension of $\mathcal{M}'$.

Then for any $q\in\mathcal{M}$ the data structure outputs a point $\hat{s}'\in S'$. We then output $\hat{s}=\pi^{-1}(\hat{s}')$. Then overall for every point $s'\in S'$ it holds that $\mathrm{d}_{\mathcal{M}}(q,\hat{s}')\leq(1+\eps)\mathrm{d}_\mathcal{M}(q,s')$. Then for any $s'\in S'$ it holds that
\[\mathrm{d}_\mathcal{M}(q,\hat{s}) \leq \mathrm{d}_\mathcal{M}(q,\hat{s}') + \mathrm{d}_\mathcal{M}(\hat{s},\hat{s}')\leq (1+\eps)\mathrm{d}_\mathcal{M}(q,s') + \max_{s\in S}\dd_\cM(s,\pi_\eps(s))\]
And thus for any $s\in S$ it holds that 
\begin{align*}
    \mathrm{d}_\mathcal{M}(q,\hat{s}) &\leq(1+\eps)(\mathrm{d}_\mathcal{M}(q,s)+\mathrm{d}_\mathcal{M}(s,s')) + \max_{s\in S}\dd_\cM(s,\pi_\eps(s))\\
    &\leq (1+\eps)\mathrm{d}_\mathcal{M}(q,s) + (2+\eps)\left(\max_{s\in S}\dd_\cM(s,\pi_\eps(s))\right),
\end{align*}
implying the claim.
\end{proof}

\begin{theorem}\label{thm:generalized}
Let $\mathcal{M}$ be a metric space that is tractably nearly-doubling. Let $S\subset\mathcal{M}$ be a set of $n$ points. Then for every $\eps>0$ one can construct a data structure which for any given $q\in\mathcal{M}$ returns a point $\hat{s}\in S$ such that for every $s\in S$ it holds that
\[\mathrm{d}_\mathcal{M}(q,\hat{s})\leq(1+\eps)\mathrm{d}_\mathcal{M}(q,s).\]
The expected preprocessing time is given by $\sum_{s\in S}\rho_\cM(s,e)+2^{O(\nu_{\cM}(e))}+\eps^{-O(\nu_{\cM}(e))}$ and the space used is in $2^{O(\nu_{\cM}(e))}n\log n$. The query time is given by $2^{O(\nu_{\cM}(e))}\log n + \eps^{-O(\nu_{\cM}(e))}$, where $e=(\eps/4)\min_{s\neq s'\in S}\mathrm{d}_\mathcal{M}(s,s')$.
\end{theorem}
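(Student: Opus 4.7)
The plan is to mirror the proof of Theorem~\ref{thm:girth_main}, substituting Lemma~\ref{lem:generalized} for Theorem~\ref{lem:approxresult} and the tractably nearly-doubling hypothesis for Lemma~\ref{lem:simplification}. I may assume $\eps\leq 1$; otherwise replace $\eps$ by $1$ in the analysis. Write $\delta := \min_{s\neq s'\in S}\dd_\cM(s,s')$, which is strictly positive because $S$ consists of distinct points in a metric space, set $\eps':=\eps/4$, and fix the Gromov-Hausdorff scale $e := \eps'\delta=(\eps/4)\delta$, matching the statement.

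Preprocessing proceeds in two stages. First, invoke the tractably nearly-doubling property at scale $e$ to obtain the subspace $\cM_e\subset\cM$ of doubling dimension at most $\nu_\cM(e)$ and the projection $\pi_e:\cM\to\cM_e$. For each $s\in S$, compute $\pi_e(s)$ in total time $\sum_{s\in S}\rho_\cM(s,e)$, and store the set $S':=\pi_e(S)\subset\cM_e$ together with the inverse mapping so that both $\pi_e|_S$ and its inverse support $O(1)$ lookups. Second, apply Lemma~\ref{lem:generalized} with input $S$, $S'$, projection $\pi_e$ (as the surjection required by the lemma) and approximation parameter $\eps'$ inside the subspace $\cM_e$. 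Because $\dd_\cM(s,\pi_e(s))\leq e$ for every $s\in S$ by the definition of tractable near-doubling, the resulting data structure answers any query $q\in\cM$ with some $\hat{s}\in S$ satisfying
\[ \dd_\cM(q,\hat{s})\leq (1+\eps')\dd_\cM(q,s) + (2+\eps')e \quad \text{for every } s\in S. \]
The preprocessing, space and query bounds claimed in the theorem then follow directly from Lemma~\ref{lem:generalized} with doubling dimension $\nu_\cM(e)$, added to the $\sum_{s\in S}\rho_\cM(s,e)$ cost of stage one.

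It remains to convert the additive error into a purely multiplicative $(1+\eps)$ factor. Let $s^*\in S$ minimize $\dd_\cM(q,\cdot)$ and split into two cases, as in the proof of Theorem~\ref{thm:girth_main}. In the \emph{small regime}, where $\dd_\cM(q,s^*)$ is small compared to $\delta$, every other $s'\in S$ satisfies $\dd_\cM(q,s')\geq \delta-\dd_\cM(q,s^*)$ by the triangle inequality; the threshold separating the two regimes is chosen so that this lower bound exceeds $(1+\eps')\dd_\cM(q,s^*)+(2+\eps')e$, forcing the data structure's output to be $s^*$ itself and the $(1+\eps)$-approximation to hold trivially. In the \emph{large regime}, $\delta$ is bounded above by a constant multiple of $\dd_\cM(q,s^*)$, so the additive term $(2+\eps')e = (2+\eps')\eps'\delta$ becomes a small multiple of $\dd_\cM(q,s^*)$; substituting into the Lemma~\ref{lem:generalized} guarantee gives $\dd_\cM(q,\hat{s})\leq (1+c\eps')\dd_\cM(q,s^*)$ for a small absolute constant $c$, and with $\eps'=\eps/4$ one checks $c\eps'\leq \eps$ for $\eps\leq 1$.

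The main technical subtlety is tuning the constants in this case analysis: the additive term coming from Lemma~\ref{lem:generalized} carries an extra factor of $(2+\eps')$ compared to the clean $\eps'\delta$ available in Theorem~\ref{lem:approxresult}, so the threshold defining the small regime and the bound in the large regime need a slightly more delicate (but still elementary) calculation than in the proof of Theorem~\ref{thm:girth_main}. Beyond this one-line arithmetic check, the argument is entirely mechanical.
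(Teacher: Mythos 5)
Your plan is exactly the paper's route: invoke the tractably nearly-doubling structure at some scale, push $S$ through the projection into a doubling subspace, run Lemma~\ref{lem:generalized} on the image, and then convert the resulting additive slack into a pure $(1+\eps)$ guarantee by the same small/large regime dichotomy used in Theorem~\ref{thm:girth_main}. You also correctly identify the one place where this is not a literal copy of that proof: Lemma~\ref{lem:generalized} produces an additive term with an extra $(2+\eps')$ coefficient compared to the clean $\eps'\delta$ delivered by Theorem~\ref{lem:approxresult}.

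The problem is that the "one-line arithmetic check" you defer to does not in fact go through at $\eps'=\eps/4$ and $e=\eps'\delta$. Carrying the $(2+\eps')$ factor through both the regime threshold (where it enlarges the denominator) and the large-regime substitution $\delta\leq\tfrac{2+\eps'}{1-(2+\eps')\eps'}\dd_\cM(q,s^*)$ gives
\[\dd_\cM(q,\hat{s}) \;\leq\; \left(1+\eps'\Bigl(1+\tfrac{(2+\eps')^2}{1-(2+\eps')\eps'}\Bigr)\right)\dd_\cM(q,s^*),\]
and the coefficient of $\eps'$ tends to $5$ as $\eps'\to 0$ (and exceeds $12$ at $\eps'=1/4$). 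Since the target is $1+\eps=1+4\eps'$, the bound overshoots by a constant factor for every $\eps\in(0,1]$, so the claimed inequality $c\eps'\leq\eps$ is simply false; one must take $\eps'$ to be a smaller fraction of $\eps$ (and shrink $e$ accordingly), which would also require adjusting the value $e=(\eps/4)\delta$ appearing in the theorem statement. For what it is worth, the paper's own proof of Theorem~\ref{thm:generalized} has the same defect in an even more direct form (it sets $\eps''=\eps\delta$ rather than the stated $(\eps/4)\delta$ and then plugs the additive error in as if it were $\eps'\delta$, dropping the $(2+\eps')$ factor entirely), so this is a constant-tuning oversight shared with the source, not a missing idea on your part; nevertheless, as written your argument does not establish the $(1+\eps)$ bound and should explicitly re-choose $\eps'$ to make the arithmetic close.

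Two minor points: your proposal (like the paper) should note that the doubling dimension fed to Har-Peled--Mendel is that of the finite set $S'$, which by Lemma~\ref{lem:subspacedoubling} is at most $2\nu_\cM(e)$, a factor-of-two correction absorbed by the $O(\cdot)$; and when you refer to "the inverse mapping" of $\pi_e|_S$, you implicitly choose one preimage per image point, which is fine but worth saying since $\pi_e$ need not be injective on $S$.
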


\begin{proof}

This proof follows the proof of Theorem~\ref{thm:girth_main}. Just like before, it is a consequence from Lemma~\ref{lem:generalized}, by selecting specific parameters.

First let $\eps'=\eps/4$ and $\eps''=\eps\min_{s\neq s'\in S}\mathrm{d}_\mathcal{M}(s,s')$. We then apply Lemma~\ref{lem:generalized} with the subspace $\mathcal{M}_{\eps''}$, the set $\pi_{\eps''}(S)\subset\mathcal{M}_{\eps''}\subset\mathcal{M}$ and the map $\pi_{\eps''}$, where the doubling dimension of $\mathcal{M}_{\eps''}$ is given by $\nu_{\cM}(\eps'')$, as $\mathcal{M}$ is tractably nearly-doubling. Thus $\mathrm{d}_H(S,\pi_{\eps''}(S))\leq\eps''$. The expected preprocessing time is given by $\sum_{s\in S}\rho_\cM(s,\eps) + 2^{O(\nu_{\cM}(\eps''))}+\eps^{-O(\nu_{\cM}(\eps''))}$ and the space used is in $2^{O(\nu_{\cM}(\eps''))}n\log n$. The query time is given by $2^{O(\nu_{\cM}(\eps''))}\log n + \eps^{-O(\nu_{\cM}(\eps''))}$.

Let $q\in\mathcal{M}$ be given. Let $\hat{s}$ be the output of the data structure, and let $s^*$ be the point in $S$ minimizing the distance to $q$. We again start, by assuming $(2+\eps')\mathrm{d}_\mathcal{M}(q,s^*)\leq(1+\eps')\min_{s\neq s'\in S}\mathrm{d}_\mathcal{M}(s,s')$. But then for every $s'\neq s^*$ we know that

\begin{align*}
    \mathrm{d}_\mathcal{M}&\geq\mathrm{d}_\mathcal{M}(s^*,s') - \mathrm{d}_\mathcal{M}(q,s^*)\geq \min_{s\neq s'\in S}\mathrm{d}_\mathcal{M}(s,s')-\mathrm{d}_\mathcal{M}(q,s^*)\\
    &>(1-\eps')(\min_{s\neq s'\in S}\mathrm{d}_\mathcal{M}(s,s')) + \eps'(\min_{s\neq s' \in S}) - \mathrm{d}_\mathcal{M}(q,s^*)\\
    &=(1+\eps')\mathrm{d}_\mathcal{M}(q,s^*) + \eps'(\min_{s\neq s' \in S}\mathrm{d}_\mathcal{M}(s,s'))
\end{align*}
and thus $\hat{s} =s^*$, implying $\mathrm{d}_\mathcal{M}(q,\hat{s})\leq (1+\eps)\mathrm{d}_\mathcal{M}(q,s^*)$.

Now assume $(2+\eps')\mathrm{d}_\mathcal{M}(q,s^*) \geq (1-\eps')\min_{s\neq s'\in S}\mathrm{d}_\mathcal{M}(s,s')$. Then we know that 
\begin{align*}
    \mathrm{d}_\mathcal{M}(q,\widehat{s}) &\leq (1+\eps')\mathrm{d}_\mathcal{M}(q,s^*) + \eps'\left(\min_{s\neq s'\in S}\mathrm{d}_\mathcal{M}(s,s')\right)\\
    &\leq (1+\eps')\mathrm{d}_\mathcal{M}(q,s^*) + \eps'\left(\frac{2+\eps'}{1-\eps'}\right)\mathrm{d}_\mathcal{M}(q,s^*).
\end{align*}
Now since $\eps\leq 1$, we know that $\eps'\leq 1/4$ and thus $\frac{2+\eps'}{1-\eps'}\leq 3$. This then concludes the proof, as
\begin{align*}
    \mathrm{d}_\mathcal{M}(q,\widehat{s})&\leq (1+\eps')\mathrm{d}_\mathcal{M}(q,s^*) + \eps'\left(\frac{2+\eps'}{1-\eps'}\right)\mathrm{d}_\mathcal{M}(q,s^*) \\
    &\leq (1+4\eps')\mathrm{d}_\mathcal{M}(q,s^*) = (1+\eps)\mathrm{d}_\mathcal{M}(q,s^*).\qedhere
\end{align*}
\end{proof}
\section{Conclusion}

In this work we established that despite the unbounded doubling dimension of the metric space of the Fr\'echet distance of curves $\bX^{d,k}$, there are spaces which are arbitrarily close to $\bX^{d,k}$ which all have bounded doubling dimension, depending on the distance to $\bX^{d,k}$. We then constructed an approximate nearest neighbour data structure for $\bX^{d,k}$ by answering approximate nearest neighbour queries in these arbitrarily close spaces via well-established data structures which can be constructed in spaces of bounded doubling dimension. As a special case we considered the setting of $c$-packed curves to which our approach extends, resulting in an improved running time in this special setting.

We further gave a constructive argument as to why the doubling dimension of the space of $(\mu,\eps)$-curves in $\xl{d}{k}{\Lambda}$ is large. The gap between the upper bound on the doubling dimension and the construction is quite small --- especially in the case of $c$-packed curves for constant $c$ --- but any further improvements to this gap would be interesting to see. We do not believe that either of the given bounds are necessarily tight.

Intuitively, it seems reasonable to assume that the 'dimension' of the problem should be at least $kd$ for $d \in O(\log nm)$ and $k=m$.
It is well-known~\cite{indyk20178} that the $L_{\infty}$-metric in $\mathbb{R}^d$ embeds isometrically into a Fr\'echet metric space of one-dimensional curves of complexity $k=3d$. 
Recently, Rohde and Psarros showed that random projections can be used to  obtain dimensionality reductions for the Fr\'echet metric~\cite{Psarros2022RandomProjections} when $d \in \Omega(\log(nm))$.

An important future research direction the reduction of the dependence on the \textgirth{} (or spread of the underlying set of vertices and edges) in the running time. In our case the dependence on the spread is a result of turning the additive error of the Gromov-Hausdorff distance between $\bX^{d,k}$ and the space of $(\mu,\eps)$-curves into a multiplicative error. With our approach this seems to be inevitable.

It would further be interesting to see what other results from spaces of bounded doubling dimension can be extended to spaces (not restricted to that of polygonal curves) of unbounded doubling dimension in this way. 

\bibliography{mybiburls}{}
\bibliographystyle{plainurl}

\newpage

\appendix 









\end{document}